\documentclass[a4paper,11pt]{amsart}

\usepackage{amsmath}
\usepackage{amsthm}
\usepackage{booktabs}
\usepackage{amssymb}
\usepackage{amsbsy}
\usepackage{amsopn}
\usepackage{amsfonts}
\usepackage{amstext}
\usepackage{bbm}

\usepackage{amscd}
\usepackage{tikz}
\usepackage{graphicx}
\usepackage{verbatim}
\usepackage{bm}
\usepackage{commath,mathtools,setspace}
\usepackage{paralist}
\usepackage{extarrows}
\usepackage{color}
\usepackage{stmaryrd}
\usepackage{mathrsfs}
\usepackage{stackengine}
\usepackage{comment}
\usepackage{stmaryrd}  
\usepackage{dsfont}
\usepackage[T1]{fontenc}
\usepackage{graphicx}
\usepackage{blindtext} 
\usepackage{geometry}
\usepackage[colorlinks=true,linkcolor=blue,citecolor=blue]{hyperref}
\usepackage[colorinlistoftodos,prependcaption,textsize=small]{todonotes}

\definecolor{red}{RGB}{255,0,0}
\definecolor{green}{RGB}{0,100,0}
\definecolor{blue}{RGB}{0,0,255}

\numberwithin{equation}{section}

\newtheorem{theorem}{Theorem}[section]
\newtheorem{lemma}[theorem]{Lemma}

\newtheorem{proposition}[theorem]{Proposition}

\newtheorem{remark}[theorem]{Remark}
\newtheorem{definition}[theorem]{Definition}
\newtheorem{example}[theorem]{Example}

\newcommand{\vect}[1]{\underline{#1}}
\newcommand{\one}{\underline{\textbf{1}}}
\newcommand{\N}{\vect{N}}
\newcommand{\n}{{\vect{n}}}
\renewcommand{\k}{\vect{k}}
\newcommand{\K}{\vect{K}}

\newcommand{\X}{\vect{X}}
\newcommand{\Y}{\vect{Y}}

\newcommand{\ul}{\vect{\ell}}
\renewcommand{\L}{\vect{L}}
\newcommand{\x}{\vect{x}}

\newcommand{\y}{\vect{y}}

\newcommand{\ualpha}{\vect{\alpha}}

\newcommand{\R}{\mathbb{R}}
\newcommand{\Natural}{\mathbb{N}}

\newcommand{\Ccal}{\mathcal{C}}
\newcommand{\Scal}{\mathcal{S}}
\newcommand{\Acal}{\mathcal{C}'}
\newcommand{\Gcal}{\mathcal{G}}

\DeclareMathOperator{\mult}{mult}
\DeclareMathOperator{\nm}{nm}
\DeclareMathOperator{\hg}{hg}
\DeclareMathOperator{\nhg}{nhg}
\DeclareMathOperator{\Dir}{Dir}
\DeclareMathOperator{\IDir}{IDir}
\DeclareMathOperator{\DirMult}{DirMult}
\DeclareMathOperator{\DirNm}{DirNm}

\DeclareMathOperator{\E}{\mathbb{E}}
\renewcommand{\P}{\mathbb{P}}

\def\nb{{\mathfrak{nb}}}
\def\onb{\overline{\mathfrak{nb}}}
\def\pa{{\mathfrak{pa}}}

\newcommand{\gbinom}[3]{\binom{#1}{#2}_{\!\!#3}}
\newcommand{\gbinomm}[3]{\left[\genfrac{}{}{0pt}{}{#1}{#2}\right]_{\!#3}}

\makeatletter
\@namedef{subjclassname@2020}{%
	\textup{2020} Mathematics Subject Classification}
\makeatother

\subjclass[2020]{Primary 62H22; Secondary 62E15, 62F15.}
\keywords{graphical models, multivariate count data, hypergeometric distribution,
negative hypergeometric distribution, decomposable graphs, Bayesian inference}

\begin{document}

\title{Graphical Models for Multivariate Count Data}
\author{Iza Danielewska}
\email{iza.danielewska.dokt@pw.edu.pl}

\author[B. Ko{\char"8A}odziejek]{Bartosz Ko{\char"8A}odziejek}
\email{bartosz.kolodziejek@pw.edu.pl}
\address{Faculty of Mathematics and Information Sciences, Warsaw University of Technology, Koszykowa 75, \mbox{00-662} Warsaw, Poland}

\thanks{This research was funded in part by National Science Centre, Poland, UMO-2022/45/B/ST1/00545.
\\ For the purpose of Open Access, the authors have applied a CC-BY public copyright licence to any Author Accepted Manuscript (AAM) version arising from this submission.}

\begin{abstract}
The classical multinomial, negative multinomial, hypergeometric, and negative hypergeometric distributions are naturally organized by two features of the sampling scheme: sampling with or without replacement and stopping after a fixed number of draws or a fixed number of failures. We complete the graphical analogue of this scheme for decomposable graphs by adding graphical hypergeometric and graphical negative hypergeometric distributions to the previously introduced graphical multinomial and graphical negative multinomial models in Danielewska et al. (2025). The resulting four families provide a unified parametric framework for graphical modeling of multivariate count data, in which dependence and admissible configurations are encoded by a graph. They interpolate between products of univariate distributions for the empty graph and the corresponding classical multivariate distributions for the complete graph, while retaining explicit Markov factorizations and tractable sampling representations. We further develop a unified Bayesian hierarchy based on graphical Dirichlet-type distributions, obtaining explicit posterior and predictive laws. The framework is particularly natural for count data arising under exclusion or incompatibility constraints. We discuss several such applications and illustrate its practical potential using Rydberg-atom excitation data.
\end{abstract}

\maketitle

\section{Introduction}

Multivariate count data arise in a wide range of applications, and a substantial literature is devoted to constructing parametric families capable of representing dependence between counts while retaining interpretable distributional structure. Classical examples include the multinomial, negative multinomial, multivariate hypergeometric, and multivariate negative hypergeometric distributions \cite{JohnsonKotzBalakrishnan1997}. A central challenge is to enrich such classical families with structured dependence while preserving explicit probabilistic interpretations, tractable marginal and conditional distributions, and feasible statistical inference.

Graphs provide a natural way of representing such dependence through conditional independence. For discrete data, the connection between conditional independence and log-linear models goes back to \cite{DarrochLauritzenSpeed1980}; see also \cite{Lauritzen} for the general theory of graphical models. Decomposable graphs are particularly attractive because Markov distributions admit clique--separator factorizations, allowing the joint distribution to be reconstructed from compatible clique marginals. This structure underlies classical analysis of discrete graphical models \cite{DarrochLauritzenSpeed1980} and Bayesian analysis through hyper-Markov laws and conjugate priors \cite{Dawid1993,MassamLiuDobra2009}. 

Closer to the present work are constructions of explicit multivariate count distributions with structured dependence. Peyhardi and Fernique \cite{PeyhardiFernique2017} characterize the graphical structure of convolution splitting distributions, while Peyhardi, Fernique and Durand \cite{PeyhardiFerniqueDurand2021} develop a broader splitting framework encompassing many classical multivariate count distributions, including the multinomial, negative multinomial, multivariate hypergeometric and negative hypergeometric families. More recently, Tree Pólya splitting distributions \cite{ValiquetteEtAl2026} introduce hierarchical splitting structures allowing richer dependence patterns.

The framework considered here combines graphical conditional independence with familiar parametric count distributions. In previous work \cite{GDir}, the graphical multinomial and graphical negative multinomial distributions, denoted by $\mult_G$ and $\nm_G$, were introduced for decomposable graphs. These distributions are natural graphical counterparts of their classical analogues in a strong sense. The characterizations in \cite[Section~3.4]{GDir} show that, for connected decomposable graphs, imposing the global Markov property together with compatible multinomial, respectively negative multinomial, clique marginals characterizes the corresponding graphical family. Thus $\mult_G$ and $\nm_G$  arise naturally from requiring classical marginal distributions on the cliques together with the conditional independences encoded by $G$.

The present paper completes and develops the resulting family of four graphical count distributions. Recall that the classical multinomial, negative multinomial, hypergeometric, and negative hypergeometric distributions can be organized according to two features of their sampling schemes: whether sampling is with or without replacement, and whether the experiment is observed for a fixed number of draws or until a fixed number of failures. The graphical multinomial and graphical negative multinomial distributions introduced in \cite{GDir} occupy the two with-replacement positions. We introduce the two remaining members, the graphical hypergeometric distribution $\hg_G$ and the graphical negative hypergeometric distribution $\nhg_G$, thereby obtaining the graphical counterpart of the classical scheme summarized in Table~\ref{tab:sampling_schemes}.

\begin{table}[!b]
    \centering
    \caption{Discrete distributions classified by sampling scheme.}
    \label{tab:sampling_schemes}
    \begin{tabular}{lll}
        \toprule
        \textbf{Stopping rule}
        & \textbf{With replacement}
        & \textbf{Without replacement} \\
        \midrule
        Fixed number of draws
        & Multinomial
        & Hypergeometric \\
        Fixed number of failures
        & Negative multinomial
        & Negative hypergeometric \\
        \bottomrule
    \end{tabular}
\end{table}

A core contribution of this paper is to give this four-family scheme a common and transparent probabilistic interpretation. The connection with the Cartier--Foata normal form \cite{cartier1969commutation,DiekertMetivier1997} identifies admissible sequences of cliques of $G^\ast$
 (equivalently, independent sets of $G$) as the natural combinatorial objects underlying the graphical constructions. In this representation, $\mult_G$ describes counts obtained from a fixed number of clique-valued draws with replacement, whereas $\nm_G$ describes the corresponding failure-stopped mechanism under the specified Cartier--Foata transition probabilities. The interpretation of $\mult_G$ used here is the same as that developed in \cite{GDir}, and the failure-stopped construction of $\nm_G$ likewise goes back to \cite[Section~4]{GDir}. The contribution here is to place these constructions in a common clique-sampling framework, making their sampling interpretation and their relationship to one another more transparent, and to extend this framework to the two without-replacement models. Specifically, the new models $\hg_G$ and $\nhg_G$ provide the corresponding finite-content mechanisms, with $\nhg_G$ giving the without-replacement analogue of the failure-stopped experiment. In this sense, the present interpretation refines and unifies the constructions of \cite{GDir} and embeds them in a single probabilistic scheme encompassing all four graphical count distributions.

This four-family viewpoint is also important for statistical applications. The four distributions should not be regarded as models for four unrelated types of data. Rather, they describe different observation schemes or statistical questions for the same underlying graph-constrained mechanism. For example, $\mult_G$ provides an appropriate model for the aggregate
counts arising from repeated independent sampling from the clique law
underlying $\mult_G(1,\y)$. If a collection of such draws is conditioned
on its aggregate count vector, then $\hg_G$ describes the corresponding
finite-content problem when only a subset of the positions is observed.
Thus an application supporting the graphical multinomial mechanism
naturally gives rise to a graphical hypergeometric problem when the
sampling design or conditioning information changes. The same
relationship connects $\nm_G$ and $\nhg_G$ in the corresponding
failure-stopped experiments.

For decomposable graphs, the four families are constructed through graph-multinomial coefficients associated with the multivariate independence polynomial \cite{levit2005independence}. Their probability mass functions admit explicit clique--separator and DAG factorizations and consequently satisfy the global Markov property, while their clique marginals belong to the corresponding classical families. At the two extremes, the empty graph yields products of the appropriate univariate distributions and the complete graph yields the corresponding classical multivariate distributions. Decomposability is an essential part of the general construction: outside the decomposable setting, the analogous graph-multinomial expressions do not in general satisfy the required Markov property, and we therefore do not define the four families in general for non-decomposable graphs. One notable special case is the graphical Bernoulli distribution
$\mult_G(1,\y)$, which is well defined for an arbitrary graph and
satisfies the corresponding graphical Markov property; see Remark~\ref{rem:non-decomposable}. The graphical Bernoulli distribution coincides with the hard-core model \cite{Scott_2005}, a classical model from statistical physics describing configurations subject to a hard exclusion constraint.

The two new distributions also preserve the classical conditioning relationships between sampling with and without replacement. The graphical hypergeometric distribution $\hg_G$ is obtained by conditioning two independent graphical multinomial random vectors on their sum, while $\nhg_G$ arises analogously from graphical negative multinomial random vectors. Together with clique--separator and DAG factorizations, these representations yield explicit marginal and conditional distributions and explicit sampling constructions.

Our approach is related to other recent parametric constructions for multivariate counts, but differs in the role assigned to graphical structure. The splitting models of \cite{PeyhardiFerniqueDurand2021} construct multivariate count distributions by combining a distribution for a total count with an allocation of that total among coordinates, while the Tree Pólya splitting models of \cite{ValiquetteEtAl2026} recursively apply such allocations along a partition tree. These constructions share with the present work an interest in explicit and tractable families built from classical multivariate count distributions. They do not, however, generally produce families indexed by a prescribed nontrivial conditional-independence graph. For the positive convolution splitting distributions covered by
\cite{PeyhardiFernique2017}, the resulting minimal graphical models are
either complete or empty. Tree Pólya splitting generates substantially richer dependence, but its partition tree represents a hierarchical allocation scheme rather than a conditional-independence graph; indeed, \cite{ValiquetteEtAl2026} identifies determination of the associated probabilistic graphical models as a direction for further research. 

The four models also fit naturally into a common Bayesian framework. Conjugate analysis has a rich history for decomposable graphical models, beginning with hyper-Markov laws and hyper-Dirichlet priors \cite{Dawid1993}; see also \cite{MassamLiuDobra2009}. In the present parametric setting, the graphical Dirichlet and graphical inverted Dirichlet distributions introduced in \cite{GDir} provide the corresponding mixing distributions. They generate two parallel Bayesian hierarchies connecting $\mult_G$ with $\hg_G$ and $\nm_G$ with $\nhg_G$. Integrating out the graphical parameters produces the $G$-Dirichlet multinomial and $G$-Dirichlet negative multinomial laws, while posterior distributions of the latent parameters and predictive distributions of the unobserved remainder remain available explicitly. Table~\ref{tab:bayesian_conjugacy_schemes} summarizes these two parallel hierarchies.

The graphical construction is particularly natural when elementary configurations are governed by exclusion or incompatibility relations. An admissible configuration is then an independent set of $G$, equivalently a clique of the complement graph $G^\ast$. Such structures arise naturally in hard-core systems, communication networks, mutually exclusive biological events, loss and resource-sharing systems, and polymer models
\cite{jiang2010csma,ciriello2012memo,kelly_loss_networks,fernandez_procacci}.
The four graphical distributions provide different count data models for such systems according to the sampling scheme and the inferential question of interest.

We illustrate this connection using experimental excitation patterns of Rydberg atoms subject to blockade constraints \cite{kim2022rydberg,kim2022data}. Under the idealized blockade constraint, an admissible excitation pattern
is an independent set of the interaction graph. We therefore use
$\mult_G(1,\y)$ as a parametric model for the distribution of such
blockade-consistent patterns. Since this is precisely the graphical Bernoulli case, the model remains well defined even when the interaction graph is non-decomposable. Experimental imperfections also generate configurations outside this theoretical support; our empirical analysis consequently concerns the conditional distribution of the blockade-consistent observations. For the graph configurations considered, the resulting goodness-of-fit analysis shows that the graphical multinomial model provides an adequate description of the admissible excitation patterns. Although this particular analysis uses $\mult_G$, it illustrates the graph-constrained mechanism underlying the four-family construction. In particular, when the interaction graph is decomposable, $\hg_G$ provides the corresponding finite-population model when the inferential question concerns a subset of a fixed collection of admissible excitation patterns conditional on their aggregate counts. Thus $\mult_G$ and $\hg_G$ address different observation schemes for the same underlying graphical system.

The paper is organized as follows.
Section~\ref{sec:preliminaries} introduces the necessary graph-theoretic notation, graph polynomials, and graph-multinomial coefficients.
Section~\ref{sec:mult-nm} recalls the graphical multinomial and graphical negative multinomial distributions.
Section~\ref{sec:hypergeometric_models} introduces their graphical hypergeometric and graphical negative hypergeometric counterparts and establishes their main distributional and graphical properties.
Section~\ref{sec:interpretation} develops probabilistic interpretations of the four models based on the Cartier--Foata normal form.
Section~\ref{sec:bayesian_inference} presents the corresponding Bayesian hierarchies and posterior and predictive distributions.
Section~\ref{sec:application_rydberg} gives the Rydberg-atom application, and Section~\ref{sec:independent_set_examples} discusses further settings in which the four graphical count models may be useful.
Proofs are collected in Appendix~\ref{app:proofs}.

\section{Preliminaries}
\label{sec:preliminaries}

We collect the notation and graph-theoretic concepts used throughout the paper,
largely following \cite{GDir}. Let \(G=(V,E)\) be a finite simple undirected
graph with nonempty vertex set \(V\).

\subsection{Notation and graph-theoretic notions}

Let \(\Natural=\{0,1,\ldots\}\) and
\(\Natural_+=\{1,2,\ldots\}\). Vectors are underlined. For
\(A\subseteq V\) and \(\underline{v}\in\mathbb{R}^V\), write
\[
\underline{v}_A=(v_i)_{i\in A},
\qquad
|\underline{v}|=\sum_{i\in V}v_i.
\]
For \(A\subseteq V\), let \(\one_A\in\{0,1\}^V\) denote its indicator
vector, whose \(v\)-th coordinate equals \(1\) if \(v\in A\) and \(0\)
otherwise. For \(a>0\) and \(\vect{b}\in\Natural^A\), we use the convention
\[
\binom{a}{\vect{b}}
=
\frac{\Gamma(a+1)}
{\Gamma(a-|\vect{b}|+1)\prod_{i\in A}b_i!},
\]
whenever the right-hand side is well defined. For
\(\x\in(0,\infty)^V\) and \(\y\in\mathbb{R}^V\), write
\[
\x^{\y}=\prod_{v\in V}x_v^{y_v}.
\]

We write \(i\sim j\) whenever \(\{i,j\}\in E\). The open and closed
neighbourhoods of \(i\in V\) are, respectively,
\[
\nb_G(i)=\{j\in V\colon i\sim j\},
\qquad
\onb_G(i)=\nb_G(i)\cup\{i\}.
\]
For \(A\subseteq V\), let \(G_A=(A,E_A)\) denote the subgraph induced by
\(A\). A set \(C\subseteq V\) is a clique if \(G_C\) is complete and is
maximal if it is not contained in a larger clique. We denote by
\(\Ccal_G\) the set of all cliques of \(G\), including \(\emptyset\), and
by \(\Ccal_G^+\) the set of maximal cliques.

The graph \(G\) is decomposable, or chordal, if it has no induced cycle
of length at least four. In this case, its maximal cliques admit an
ordering \((C_1,\ldots,C_K)\) satisfying the running intersection
property: for each \(k=2,\ldots,K\), there exists \(j<k\) such that
\[
S_k
:=
C_k\cap\bigcup_{i=1}^{k-1}C_i
\subseteq C_j.
\]
The sets \(S_k\) are called separators. We write
\[
\Scal_G^-=\left\{S_k \colon k=2,\ldots,K,\quad S_k\neq\emptyset
\right\},
\qquad
\nu_S
=
\bigl|\{k\in\{2,\ldots,K\}\colon S_k=S\}\bigr|,
\quad S\in\Scal_G^-,
\]
for the set of distinct separators and their multiplicities.

A distribution of a random vector \(\N=(N_v)_{v\in V}\) satisfies the
global Markov property with respect to \(G\) if
\[
\N_A\perp\!\!\!\perp\N_B\mid\N_C
\]
whenever \(A,B,C\subseteq V\) are disjoint and \(C\) separates \(A\)
and \(B\) in \(G\). For a decomposable graph \(G\), any PMF admitting the clique--separator
factorization
\begin{align}
\label{eq:clique-separator-factorization}
\P(\N=\n)
=
\frac{
\displaystyle\prod_{C\in\Ccal_G^+}
\P(\N_C=\n_C)
}{
\displaystyle\prod_{S\in\Scal_G^-}
\P(\N_S=\n_S)^{\nu_S}
}
\end{align}
whenever the right-hand side is well defined, satisfies the global Markov
property with respect to \(G\). Conversely, for a strictly positive PMF on a
product state space, the global Markov property is equivalent to
\eqref{eq:clique-separator-factorization}, see \cite{Lauritzen}.

Let \(\mathcal G\) be a directed acyclic graph (DAG) on \(V\). Its skeleton is the
undirected graph obtained by replacing every directed edge by an undirected
edge. For \(v\in V\), let
\[
\pa(v)
=
\{u\in V:\ u\to v\text{ in }\mathcal G\}
\]
denote the set of parents of \(v\). We call \(\mathcal G\) moral (or perfect) if
\(\operatorname{pa}(v)\) is a clique in its skeleton for every \(v\in V\).

\subsection{Graph polynomials and coefficients}

Let \(G^*=(V,E^*)\) be the complement of \(G\), where $
E^*
=
\bigl\{
\{i,j\}\colon i,j\in V,\ i\neq j,\ \{i,j\}\notin E
\bigr\}$.

\begin{definition}[Graph polynomials]
The graph polynomials \(\delta_G,\Delta_G\colon\mathbb{R}^V\to\mathbb{R}\)
are defined by
\[
\delta_G(\y)
=
\sum_{C\in\Ccal_{G^*}}\prod_{v\in C}y_v,
\qquad
\Delta_G(\x)
=
\sum_{C\in\Ccal_{G^*}}(-1)^{|C|}
\prod_{v\in C}x_v.
\]
Thus, \(\Delta_G(\x)=\delta_G(-\x)\).
\end{definition}

The polynomial \(\delta_G\) is the multivariate independence polynomial
of \(G\), equivalently the clique polynomial of \(G^*\); see
\cite{goldwurm1998clique,levit2005independence,Mult24}.

\begin{definition}[Graph-multinomial coefficients, {\cite{GDir}}]
Let \(G=(V,E)\) be a finite simple graph. For \(r\in\Natural_+\), the
graph-multinomial coefficients of the first type are defined by
\[
\delta_G(\y)^r
=
\sum_{\n\in\Natural^V}
\gbinom{r}{\n}{G}\y^\n.
\]
For \(s>0\), the graph-multinomial coefficients of the second type are
defined by the formal power series expansion
\[
\Delta_G(\x)^{-s}
=
\sum_{\n\in\Natural^V}
\gbinomm{|\n|+s-1}{\n}{G}\x^\n.
\]
\end{definition}

For decomposable graphs, both families admit explicit product
factorizations.

\begin{lemma}[{\cite[Lemma 2.15]{GDir}}]
\label{lem:coeff_factorization}
If \(G\) is decomposable, then, for \(r\in\Natural_+\), \(s>0\), and
\(\n\in\Natural^V\),
\[
\gbinom{r}{\n}{G}
=
\frac{
    \displaystyle\prod_{C\in\Ccal_G^+}
    \binom{r}{\n_C}
}{
    \displaystyle\prod_{S\in\Scal_G^-}
    \binom{r}{\n_S}^{\nu_S}
},
\qquad
\gbinomm{|\n|+s-1}{\n}{G}
=
\frac{
    \displaystyle\prod_{C\in\Ccal_G^+}
    \binom{|\n_C|+s-1}{\n_C}
}{
    \displaystyle\prod_{S\in\Scal_G^-}
    \binom{|\n_S|+s-1}{\n_S}^{\nu_S}
}.
\]
\end{lemma}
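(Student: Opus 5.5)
Both identities will be proved by induction on the number $K$ of maximal cliques, along a perfect ordering $(C_1,\ldots,C_K)$ with the running intersection property. The key structural input is the following. Let $G$ be decomposable with last clique $C_K$ and separator $S=S_K$. Let $R=C_K\setminus S$ and let $G'=G_{V\setminus R}$, which is again decomposable with cliques $C_1,\ldots,C_{K-1}$. Then no vertex of $R$ is adjacent to a vertex of $V\setminus C_K$. Hence in $G^*$ every vertex of $R$ is joined to every vertex of $V':=V\setminus C_K$, and $R$ is independent in $G^*$. A clique of $G^*$ is therefore of one of two kinds: either a clique of $G^*_{V\setminus R}$, or a single vertex $r\in R$ together with a clique of $G^*_{V'}$ (which is allowed to be empty). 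This gives the recursion
\[
\delta_G(\y)=\delta_{G'}(\y_{V\setminus R})+\Bigl(\sum_{r\in R}y_r\Bigr)\,\delta_{G_{V'}}(\y_{V'}).
\]
In the same way, applying the decomposition to $S$ inside $G'$, I would express $\delta_{G'}$ as $\delta_{G_{V'}}$ plus terms involving $S$. It is convenient to rewrite this in "conditional" form. Every clique of $G^*$ meets $C_K$ in at most one vertex, so
\[
\delta_G(\y)=\delta_{G_{V'}}(\y_{V'})+\sum_{v\in C_K}y_v\,\delta_{G_{V'\setminus \nb(v)}}.
\]

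For the first type I would avoid messy polynomial algebra by using a probabilistic/combinatorial interpretation. The coefficient $\gbinom{r}{\n}{G}$ counts sequences of $r$ cliques $(D_1,\ldots,D_r)$ of $G^*$ whose vertex-multiplicity vector is $\n$. For each $C\in\Ccal_G^+$, each $D_j$ meets $C$ in at most one vertex, so $\n_C$ determines $\binom{r}{\n_C}$ possible "traces" on $C$: for each position $j$, which vertex of $C$ (if any) lies in $D_j$. The claim is then that a sequence of cliques is determined by its traces on the maximal cliques, and that a family of traces is realizable precisely when the traces are consistent on the separators. Realizability means that at each position the chosen vertices are pairwise non-adjacent in $G$. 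For two vertices in a common maximal clique this is automatic, since each trace picks at most one vertex. For vertices that do not share a clique, the two vertices are non-adjacent anyway. So any consistent gluing of the traces gives an independent set at every position.

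Counting consistent gluings along the perfect ordering then goes as follows. At step $k$, the trace on $C_k$ must agree with an already fixed trace on $S_k$. Given the trace on $S_k$ (which has $\binom{r}{\n_{S_k}}$ possibilities, all equally represented by symmetry), the number of compatible traces on $C_k$ is $\binom{r}{\n_{C_k}}/\binom{r}{\n_{S_k}}$. Multiplying these factors yields the stated product formula. Separators with $S_k=\emptyset$ contribute the factor $\binom{r}{\vect 0}=1$, consistent with excluding them from $\Scal_G^-$.

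For the second type I would use the analogous combinatorial model: $\Delta_G(\x)^{-s}$ arises from the Cartier--Foata/trace monoid inversion. Equivalently, one can use the identity $\gbinomm{|\n|+s-1}{\n}{G}=\sum_{r}(\ldots)$ relating it to first-type coefficients. The cleaner route is direct. Write
\[
\Delta_G^{-s}=\sum_m\binom{s+m-1}{m}\bigl(1-\Delta_G\bigr)^m.
\]
Alternatively, verify that the proposed product $P(\n)$ satisfies the recursion obtained from $\Delta_G\cdot\Delta_G^{-s}=\Delta_G^{1-s}$, i.e.
\[
\sum_{D\in\Ccal_{G^*}}(-1)^{|D|}P_s(\n-\one_D)=P_{s-1}(\n),
\]
by induction on $K$ using the clique splitting above. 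The main obstacle is this second-type identity. I would first try the gluing argument again, with "negative multinomial traces": sequences terminated by $s$ failures, for integer $s$. In that model each clique trace is a negative-multinomial word counted by $\binom{|\n_C|+s-1}{\n_C}$, and the same conditional counting applies. The result then extends from integer $s$ to real $s>0$ because both sides are polynomials in $s$ for fixed $\n$.
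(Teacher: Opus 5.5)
Your argument for the first-type coefficients is correct and complete. Write an $r$-tuple of independent sets of $G$ as position sets $P_v\subseteq\{1,\dots,r\}$ with $|P_v|=n_v$. The only constraints are pairwise disjointness inside each maximal clique. So along a perfect ordering the sets $P_v$, $v\in C_k\setminus S_k$, are chosen freely in the complement of $\bigcup_{v\in S_k}P_v$; that union is already a disjoint union, because $S_k\subseteq C_j$ for some $j<k$. This gives $\prod_k\binom{r-|\n_{S_k}|}{\n_{C_k\setminus S_k}}$, which is the stated ratio. (The paper does not prove this lemma itself; it imports it from \cite{GDir}.)

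The gap is the second identity. You rightly call it the main obstacle, but you only list strategies for it. The expansion $\sum_m\binom{s+m-1}{m}(1-\Delta_G)^m$ is just a restatement of the definition. The recursion $\sum_{D}(-1)^{|D|}P_s(\n-\one_D)=P_{s-1}(\n)$ is precisely the nontrivial identity to be proved, so saying it can be checked ``by induction on $K$'' establishes nothing.

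The gluing route does work, but ``the same conditional counting applies'' hides all of its content. The objects counted by $\gbinomm{|\n|+q-1}{\n}{G}$ are not families of position sets with purely local constraints; they are equivalence classes of words. You need to supply three things.
\begin{enumerate}
\item[(i)] The model. By the Cartier--Foata identity $\Delta_G(\x)^{-1}=\sum_{\ell\in L}\x^{\varepsilon(\ell)}$, for $q\in\Natural_+$ the coefficient counts $q$-tuples of traces of total content $\n$, as in the proof of Theorem~\ref{thm:interp_nhg}. You may encode such a tuple as a single trace $\ell_1f\ell_2f\cdots\ell_qf$ over $V\cup\{f\}$, with a new letter $f$ that commutes with nothing.
\item[(ii)] Projections. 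Erasing all letters outside $C\cup\{f\}$ is well defined on traces. For a clique $C$ of $G$ it yields a word ending in $f$, and there are $\binom{|\n_C|+q-1}{\n_C}$ such words.
\item[(iii)] Gluing. With $R=C_K\setminus S_K$, the map sending a trace to its projections on $(V\setminus R)\cup\{f\}$ and on $C_K\cup\{f\}$ is a bijection onto the pairs that agree on $S_K\cup\{f\}$.
\end{enumerate}

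In (iii), surjectivity follows by inserting the $R$-letters into a representative word. Injectivity needs the projection lemma for traces, cf.\ \cite{DiekertMetivier1997}: two words are equivalent iff their projections onto $\{a,b\}$ coincide for every pair of letters that are equal or do not commute. It also needs the fact, which you noted but never used, that $R$ has no neighbours outside $C_K$. Hence every non-commuting pair lies in $(V\setminus R)\cup\{f\}$ or in $C_K\cup\{f\}$.

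This is where decomposability enters, and nothing like it is automatic for traces. Take $G=C_4$, $q=1$ and $\n=\one_V$: the edge projections $12,23,34,41$ agree on every vertex, yet no trace has them as projections. By contrast, position sets that agree on every vertex always glue.

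Once (i)--(iii) are in place, the number of extensions is $(|\n_{C_K}|+q-1)!/\bigl((|\n_{S_K}|+q-1)!\prod_{v\in R}n_v!\bigr)$, independent of the prescribed projection. Induction on $K$ then gives the formula for integer $s$.

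For the passage to real $s>0$, the right-hand side is a polynomial in $s$ only after a cancellation. Using $S_k\subseteq C_k$, it equals $\prod_k (s+|\n_{S_k}|)\cdots(s+|\n_{C_k}|-1)\big/\prod_{v\in V}n_v!$, and you should say so. Alternatively, compare the two sides as rational functions of $s$.
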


\section{Graphical multinomial and negative multinomial distributions}
\label{sec:mult-nm}

We briefly recall the two discrete models defined in \cite{GDir} and their
conjugate priors.

\begin{definition}[Graphical multinomial and negative multinomial distributions]
Let $G=(V,E)$ be a decomposable graph.
\begin{enumerate}
    \item A random vector $\K$ has the $G$-multinomial distribution,
    denoted by $\K \sim \mult_G(r,\y)$, with parameters
    $r \in \Natural_+$ and $\y \in (0,\infty)^V$, if
    \[
    \P(\K=\k)
    =
    \gbinom{r}{\k}{G} \y^{\k} \,\delta_G(\y)^{-r},
    \qquad \k \in \mathbb{N}_{G,r},
    \]
    where
    \[
    \mathbb{N}_{G,r}
    :=
    \{\n \in \Natural^V \colon \max_{C \in \Ccal_G^+} |\n_C| \le r \}.
    \]

    \item A random vector $\K$ has the $G$-negative multinomial distribution,
    denoted by $\K \sim \nm_G(r,\x)$, with parameters
    $r>0$ and 
    \begin{align}\label{eq:MG}
\x\in M_G :=
\left\{
\x \in (0,\infty)^V \colon
\Delta_{G_A}(\x_A) > 0 \ \text{for all } A \subseteq V
\right\},
\end{align}
if
    \[
    \P(\K=\k)
    =
    \gbinomm{|\k|+r-1}{\k}{G}
    \x^{\k} \Delta_G(\x)^r,
    \qquad \k \in \Natural^V.
    \]
\end{enumerate}
\end{definition}

The parameter domain \(M_G\) of $\nm_G$ can be characterized as the largest subset of
\((0,\infty)^V\) on which the series
\[
    \sum_{\k\in\Natural^V}
    \gbinomm{|\k|+r-1}{\k}{G}
    \x^{\k}
\]
converges, \cite[Lemma 2.18]{GDir}. 
It is useful to keep in mind the simple bounds
\[
    \left\{\x\in(0,\infty)^V\colon \sum_{i\in V}x_i<1\right\}
    \subset M_G
    \subset
    (0,1)^V .
\]

Both distributions $\mult_G$ and $\nm_G$ are globally Markov with respect to $G$ when $G$ is
decomposable; see \cite[Section~3.3]{GDir}.  For the multinomial model there is
one exceptional case in which decomposability is not needed.  We record it
separately.

\begin{remark}[Bernoulli distribution for arbitrary graphs]
\label{rem:mult_one_arbitrary_G}
Let \(G=(V,E)\) be any finite simple graph and let \(\y\in(0,\infty)^V\).  Since
\[
\delta_G(\y) = \sum_{C\in \Ccal_{G^\ast}} \prod_{v\in C} y_v
=
\sum_{\k\in\mathbb N_{G,1}} \y^{\k},
\qquad\mbox{
we have}\qquad 
\gbinom{1}{\k}{G}
=
\mathbf 1_{{\k\in\mathbb N_{G,1}}},
\qquad \k\in\Natural^V.
\]
By definition,
\[
    \mathbb N_{G,1}
    =
    \{\k\in\{0,1\}^V\colon  k_i+k_j\le 1
      \text{ for every } \{i,j\}\in E\},
\]
and thus, we obtain 
\[
    \P(\K=\k)
    =
    \frac{1}{\delta_G(\y)}
    \prod_{v\in V} y_v^{k_v}
    \prod_{\{i,j\}\in E} \mathbf 1_{\{k_i+k_j\le 1\}},
    \qquad \k\in\{0,1\}^V .
\]
This factorization immediately gives the pairwise Markov property. Because the distribution has structural zeros, the usual equivalence between
pairwise and global Markov properties for strictly positive laws on product
spaces cannot be invoked directly.

Let \(A,B,S\subseteq V\) be disjoint and suppose that \(S\)
separates \(A\) and \(B\) in \(G\).  Fix \(s\in\{0,1\}^S\) with
\(\P(\K_S=\underline{s})>0\), and let \(D_1,\ldots,D_m\) be the connected components of
\(G_{V\setminus S}\).  Conditional on \(\K_S=\underline{s}\), the law of
\(\K_{V\setminus S}\) is proportional to
\[
    \prod_{\ell=1}^m
    \left[
        \prod_{v\in D_\ell} y_v^{k_v}
        \prod_{\{i,j\}\in E\colon  i,j\in D_\ell}
            \mathbf 1_{\{k_i+k_j\le 1\}}
        \prod_{\{i,j\}\in E\colon  i\in D_\ell,\ j\in S}
            \mathbf 1_{\{k_i+s_j\le 1\}}
    \right].
\]
Thus \(\K_{D_1},\ldots,\K_{D_m}\) are conditionally independent given
\(\K_S=\underline{s}\).  Since \(S\) separates \(A\) and \(B\), no component \(D_\ell\)
intersects both \(A\) and \(B\), and therefore
\(
    \K_A \perp\!\!\!\perp \K_B \mid \K_S .
\)
Hence \(\mult_G(1,\y)\) is globally Markov with respect to \(G\) for arbitrary
finite \(G\).

It is also worth mentioning that the graphical Bernoulli distribution can be viewed as the push-forward of the hard-core measure \cite{Scott_2005} under the map that records the vertices belonging to a sampled independent set, the hard-core model being a classical model from statistical physics, see also \cite[Section 4.1]{GDir}.
\end{remark}

\begin{remark}[Non-decomposable graphs]
\label{rem:non-decomposable}

The four families of distributions based on coefficients considered in this paper are restricted to decomposable graphs, except for the case \(\mult_G(1,\y)\) described in Remark \ref{rem:mult_one_arbitrary_G}, which is well defined and globally Markov for every finite graph \(G\). For a decomposable graph, the global Markov property together with compatible
clique marginals determines the joint distribution through the clique-separator
factorization.

The graph-multinomial coefficient formulas themselves remain well defined
for arbitrary graphs. However, except for special cases such as the Bernoulli case described in
Remark~\ref{rem:mult_one_arbitrary_G}, the resulting distributions need not satisfy the
global Markov property with respect to \(G\). In particular, for the cycle \(C_4\), the
coefficient-based graphical multinomial distribution with \(r=2\) has multinomial edge
marginals but is not \(C_4\)-Markov, see \cite[Remark~3.14]{GDir}.

For non-decomposable graphs, a natural alternative is to define graphical
models through prescribed classical clique marginals together with the global
Markov property, rather than directly through graph-multinomial coefficients.
This leads to a clique-marginal completion problem, in which one seeks a joint
distribution compatible with the specified local laws. The development of these
completion-based models, including questions of compatibility, existence,
uniqueness, computation, and extensions to infinite-support distributions, is
left for future work.
\end{remark}

\section{Graphical hypergeometric models}
\label{sec:hypergeometric_models}

We now introduce the two new families of distributions that are the focus of this paper, corresponding to sampling without replacement.

\subsection{The graphical hypergeometric distribution}

\begin{definition}[Graphical hypergeometric distribution]
\label{def:hg_G}
Let $G=(V,E)$ be a decomposable graph. A random vector $\N$ has the $G$-hypergeometric distribution with parameters $(M, \K, r)\in \Natural_+\times \Natural^V\times \Natural_+$ satisfying  
\(
M \geq \max_{C \in \Ccal_G^+} |\K_C| \text{ and } M>r
\)
if its PMF is
\begin{align}\label{PMF_hg}
\P(\N=\n) = \frac{\gbinom{r}{\n}{G} \gbinom{M-r}{\K-\n}{G}}{\gbinom{M}{\K}{G}}=:\hg_G(M,\K,r)(\n), \quad \n \in S^{\hg_G}_{M,\K,r}, \end{align}
where the support is
\begin{align*}
S^{\hg_G}_{M,\K,r} =\left\{\n \in \Natural^V \colon n_v \le K_v \text{ for all } v \in V, \,\,0 \leq r-|\n_C|\leq M-|\K_C| \text{ for all }C\in\Ccal_G\right\}.
\end{align*}
We write $\N \sim \hg_G(M, \K, r)$.
\end{definition}

The graphical hypergeometric distribution arises naturally as the conditional distribution of a $\mult_G$ vector, given the sum of it and another independent $\mult_G$ vector. This parallels the classical relationship between the binomial and hypergeometric distributions.

\begin{proposition}[Connection to $\mult_G$]\label{prop:hg}
Let $M,r\in\Natural_+$ with $M>r$, and let $\N_1 \sim \mult_G(r, \y)$ and $\N_2 \sim \mult_G(M-r, \y)$ be independent random vectors. Then the conditional distribution of $\N_1$ given their sum is $G$-hypergeometric:
\[ 
\N_1 \mid \N_1 + \N_2 = \K \sim \hg_G(M,\K,r).
\]

\end{proposition}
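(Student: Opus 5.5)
The plan is the standard Bayes computation, with one generating-function identity doing the real work. I first compute the law of the sum $\N_1+\N_2$ and then divide the joint probability by it.

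For $\n$ and $\K$ we have, by independence and the definition of $\mult_G$,
\[
\P(\N_1=\n,\ \N_1+\N_2=\K)
=\gbinom{r}{\n}{G}\gbinom{M-r}{\K-\n}{G}\,\y^{\n}\y^{\K-\n}\,\delta_G(\y)^{-r}\delta_G(\y)^{-(M-r)}
=\gbinom{r}{\n}{G}\gbinom{M-r}{\K-\n}{G}\,\y^{\K}\,\delta_G(\y)^{-M}.
\]
Here we set $\gbinom{r}{\n}{G}=0$ outside $\mathbb N_{G,r}$; this is consistent with the definition via $\delta_G(\y)^r$, since the coefficient of $\y^\n$ vanishes unless $\n$ is a sum of $r$ indicator vectors of cliques of $G^\ast$.

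The key step is the Vandermonde-type identity
\[
\sum_{\n}\gbinom{r}{\n}{G}\gbinom{M-r}{\K-\n}{G}=\gbinom{M}{\K}{G}.
\]
It follows at once by comparing coefficients of $\y^{\K}$ in $\delta_G(\y)^r\,\delta_G(\y)^{M-r}=\delta_G(\y)^M$, which holds for the defining polynomial expansions. Summing the joint probability over $\n$ then gives $\P(\N_1+\N_2=\K)=\gbinom{M}{\K}{G}\y^{\K}\delta_G(\y)^{-M}$, so $\N_1+\N_2\sim\mult_G(M,\y)$. Dividing, the factors $\y^{\K}\delta_G(\y)^{-M}$ cancel and the conditional PMF equals \eqref{PMF_hg}. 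In particular it does not depend on $\y$.

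The main obstacle is bookkeeping of supports, namely checking that the set of $\n$ where the numerator is nonzero is exactly $S^{\hg_G}_{M,\K,r}$, with $\K$ in the support of $\mult_G(M,\y)$ (equivalently $M\ge\max_C|\K_C|$).

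For this I use Lemma~\ref{lem:coeff_factorization}. With the given convention for $\binom{a}{\vect b}$, the factor $\binom{r}{\n_C}$ is nonzero and finite iff $|\n_C|\le r$. Likewise $\binom{M-r}{\K_C-\n_C}$ requires $n_v\le K_v$ and $|\K_C|-|\n_C|\le M-r$. Applying this to maximal cliques suffices, since the conditions for subcliques (including $\emptyset$, giving $r\le M$) follow.

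Separators need a short argument. If the maximal-clique conditions hold, then every separator factor is positive and finite, because separators are subsets of cliques. Conversely, for a product of $\mult_G$ laws the numerator vanishes off $\mathbb N_{G,r}\times\mathbb N_{G,M-r}$ regardless of the factorization formula. Hence $\n\in S^{\hg_G}_{M,\K,r}$ is exactly the positivity set, and the conditional law is $\hg_G(M,\K,r)$.
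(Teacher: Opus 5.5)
Your proposal is correct and follows the paper's proof: a direct Bayes computation in which the common factor $\y^{\K}\delta_G(\y)^{-M}$ cancels. The only difference is that the paper takes $\N_1+\N_2\sim\mult_G(M,\y)$ for granted. You instead derive it from $\delta_G(\y)^r\delta_G(\y)^{M-r}=\delta_G(\y)^M$, and you also verify that the positivity set is exactly $S^{\hg_G}_{M,\K,r}$, which the paper leaves implicit.
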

The proof follows from a direct calculation using the PMFs and the convolution property of graph-multinomial coefficients.

Like the other graphical models, the $\hg_G$ distribution possesses the global Markov property and its clique marginals are classical hypergeometric distributions.

\begin{theorem}\label{thm:properties_hg}
Assume $\N \sim \hg_G(M, \K, r)$ for a decomposable graph $G$.
\begin{enumerate}
    \item The distribution of $\N$ has the global Markov property with respect to $G$.
    \item For any clique $C \in \Ccal_G$, the marginal distribution of $\N_C$ is the classical multivariate hypergeometric distribution, $\N_C \sim \hg(M, \K_C, r)$.
\end{enumerate}
\end{theorem}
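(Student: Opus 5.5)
Both parts reduce to Proposition~\ref{prop:hg} together with the known properties of $\mult_G$. First we realize $\hg_G(M,\K,r)$ as a conditional law: take $\N_1\sim\mult_G(r,\y)$ and $\N_2\sim\mult_G(M-r,\y)$ independent, with any fixed $\y\in(0,\infty)^V$. The condition $M\ge\max_C|\K_C|$ makes $\P(\N_1+\N_2=\K)>0$, so by Proposition~\ref{prop:hg} the vector $\N$ has the law of $\N_1$ given $\N_1+\N_2=\K$. This identification gives part~(2) at once. For a clique $C$, the clique marginals of $\mult_G$ are classical multinomial (\cite[Section~3]{GDir}). Hence $(\N_1)_C\sim\mult(r,\cdot)$ and $(\N_2)_C\sim\mult(M-r,\cdot)$, both with the same clique-normalized parameters.

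The difficulty is that conditioning on the whole vector $\N_1+\N_2=\K$ is not the same as conditioning on $(\N_1+\N_2)_C=\K_C$. So for part~(2) I would compute directly rather than appeal to conditioning. Using Lemma~\ref{lem:coeff_factorization}, write
\[
\hg_G(M,\K,r)(\n)=\frac{\prod_{C'}\binom{r}{\n_{C'}}\binom{M-r}{\K_{C'}-\n_{C'}}\big/\binom{M}{\K_{C'}}}{\prod_{S}\Bigl[\binom{r}{\n_S}\binom{M-r}{\K_S-\n_S}\big/\binom{M}{\K_S}\Bigr]^{\nu_S}} .
\]
Each clique factor is precisely the classical hypergeometric PMF $h_{C'}(\n_{C'}):=\hg(M,\K_{C'},r)(\n_{C'})$, with the usual convention that the unused part of the urn is $M-|\K_{C'}|$. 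The classical hypergeometric family is consistent under marginalization: the marginal of $h_{C'}$ onto $S\subseteq C'$ is $h_S$, obtained by merging colours into the residual class. So $\{h_C\}$ is a consistent family of clique laws, and the display is exactly the clique--separator product \eqref{eq:clique-separator-factorization} built from them.

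Next I would marginalize along a perfect ordering $C_1,\dots,C_K$ with running intersection, peeling off the last clique. Summing $\n_{C_K\setminus S_K}$ removes $h_{C_K}/h_{S_K}$, since this ratio is the conditional PMF of $h_{C_K}$ given its $S_K$-part and therefore sums to $1$. Induction then shows that the marginal on $C_1\cup\dots\cup C_k$ is the analogous product, and the marginal on each maximal clique is $h_C$. For a non-maximal clique we pass through a maximal clique containing it and use classical consistency again. The main obstacle is the support bookkeeping. We must check that the summation range for $\n_{C_K\setminus S_K}$, namely the constraints in $S^{\hg_G}_{M,\K,r}$ involving $C_K$, coincides with the support of the classical conditional law, and that zero separator factors only occur where the numerator also vanishes. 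The constraints $0\le r-|\n_C|\le M-|\K_C|$ for all cliques $C$ are exactly what makes this work, since every clique lies in some $C_k$.

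Part~(1) then follows from the factorization. The PMF equals \eqref{eq:clique-separator-factorization} with marginals $h_C$ on the support and vanishes off it, and the paper states that any PMF admitting this factorization is globally Markov. To be careful with the structural zeros, I would instead verify directly the DAG/sequential form: with the perfect ordering, $\N_{C_k\setminus S_k}$ is independent of the earlier cliques given $\N_{S_k}$. This holds because the factorization is a product of functions of $\n_{C_k}$, and the support constraints are themselves clique-wise. Separation statements then follow by the standard decomposable-graph argument of \cite{Lauritzen}, applied clique by clique.
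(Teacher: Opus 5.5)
Your proposal is correct and follows essentially the paper's proof once the opening detour through Proposition~\ref{prop:hg} is set aside; as you note yourself, that conditional representation does not by itself give part~(2), so the claim that it does ``at once'' should be dropped. The core argument is the same: Lemma~\ref{lem:coeff_factorization} turns the PMF into a clique--separator product of classical hypergeometric PMFs; consistency of that family under merging colours, together with the standard peeling along a perfect ordering, gives the maximal-clique marginals, with non-maximal cliques handled through a containing maximal clique; and the factorization gives the global Markov property, with your treatment of the support constraints and structural zeros making explicit what the paper leaves implicit.
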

The proof of these properties relies on the factorization of the graph-multinomial coefficients given in Lemma \ref{lem:coeff_factorization}.

\begin{theorem}\label{thm:dag-hg}
    Assume that \(G\) is a decomposable graph. Let \(\Gcal\) be a moral DAG with skeleton \(G\). If \(\N\sim \hg_G( M,\K, r)\), then 
    \begin{align} \label{eq-Markov_prop}
    \P(\N=\n)=\prod_{v\in V} \hg\left( M - |\K_{\pa(v)}|, K_v,r - |\n_{\pa(v)}| \right)(n_v), \quad \n \in S^{\hg_G}_{M,\K,r},
    \end{align}
    where \(\hg\left(M, K, r\right)\) is the PMF for the classical hypergeometric distribution, i.e.,
\[
\hg\left( M,K_v, r\right)(n_v) = \frac{\binom{K_v}{n_v}\binom{M-K_v}{r-n_v}}{\binom{M}{r}}.
\]
    In particular, for every \(v\in V\) and every
\(\n_{\pa(v)}\) such that
\(
\P(\N_{\pa(v)}=\n_{\pa(v)})>0,
\)
we obtain
\[
    \P(N_v=n_v| \N_{\pa(v)}=\n_{\pa(v)})=\hg\left( M - |\K_{\pa(v)}|,K_v, r - |\n_{\pa(v)}| \right)(n_v).
\]
\end{theorem}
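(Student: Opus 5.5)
Our plan is to derive the DAG factorization from the clique--separator formula of Lemma~\ref{lem:coeff_factorization}, applied to each of the three graph-multinomial coefficients in \eqref{PMF_hg}, together with the combinatorics of a perfect elimination ordering. Since \(\Gcal\) is moral with skeleton \(G\), choose a topological ordering \(v_1,\dots,v_p\) of \(\Gcal\). For each \(j\) the set \(\pa(v_j)\cup\{v_j\}\) is a clique of \(G\), and \(\pa(v_j)=\nb_G(v_j)\cap\{v_1,\dots,v_{j-1}\}\). Then \(\{v_1,\dots,v_j\}\) grows by adding a simplicial vertex, and the standard argument for decomposable graphs yields, for any function \(f\) on cliques of the form \(f(C)=\binom{r}{\n_C}\) (or its analogues),
\[
\frac{\prod_{C\in\Ccal_G^+}f(C)}{\prod_{S\in\Scal_G^-}f(S)^{\nu_S}}
=\prod_{v\in V}\frac{f(\pa(v)\cup\{v\})}{f(\pa(v))}.
\]
We would prove this identity by induction on \(|V|\), removing the last vertex \(v_p\). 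Removing a simplicial vertex either leaves the maximal cliques unchanged apart from deleting \(v_p\) from its unique maximal clique, or deletes a maximal clique \(\pa(v_p)\cup\{v_p\}\) together with a separator equal to \(\pa(v_p)\). In both cases the ratio changes by exactly \(f(\pa(v_p)\cup\{v_p\})/f(\pa(v_p))\). This telescoping step is the technical heart of the proof. Care is needed because the separator multiset is defined through a particular running-intersection ordering, and the coefficient formula does not depend on that choice.

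Next we apply the identity to the three coefficients with \(f_1(C)=\binom{r}{\n_C}\), \(f_2(C)=\binom{M-r}{\K_C-\n_C}\) and \(f_3(C)=\binom{M}{\K_C}\). Writing \(P=\pa(v)\), \(a=|\n_P|\) and \(b=|\K_P|\), the ratios are explicit:
\[
\frac{f_1(P\cup v)}{f_1(P)}=\binom{r-a}{n_v},\qquad
\frac{f_2(P\cup v)}{f_2(P)}=\binom{M-r-b+a}{K_v-n_v},\qquad
\frac{f_3(P\cup v)}{f_3(P)}=\binom{M-b}{K_v}.
\]
The \(v\)-th factor is therefore
\[
\frac{\binom{r-a}{n_v}\binom{(M-b)-(r-a)}{K_v-n_v}}{\binom{M-b}{K_v}}.
\]
This is the hypergeometric PMF with total \(M-b\), \(K_v\) marked items and \(r-a\) draws, written in its symmetric form \(\binom{r'}{n}\binom{M'-r'}{K-n}/\binom{M'}{K}=\binom{K}{n}\binom{M'-K}{r'-n}/\binom{M'}{r'}\). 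This gives \eqref{eq-Markov_prop}.

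Throughout we must check that on \(S^{\hg_G}_{M,\K,r}\) all the binomials are finite and nonzero, so that the ratio manipulations are legitimate. This follows from the constraints \(0\le r-|\n_C|\le M-|\K_C|\) and \(n_v\le K_v\) for every clique \(C\).

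For the conditional statement, we sum \eqref{eq-Markov_prop} successively over \(n_{v_p},n_{v_{p-1}},\dots\) in reverse topological order. Each factor is a PMF in its own variable whose parameters depend only on earlier vertices, so each summation equals one. This shows that the joint law of \(\N_{\{v_1,\dots,v_j\}}\) is the partial product. Since \(\pa(v)\cup\{v\}\) is an initial-segment-closed situation, we take ratios of two such marginals, first ordering \(\pa(v)\) before all other predecessors. A moral DAG allows choosing a topological order where the ancestors closure \(\pa(v)\) comes first; simpler still, \(\pa(v)\) is a clique, so we may argue via the clique marginal of Theorem~\ref{thm:properties_hg}(2) that the classical multivariate hypergeometric \(\hg(M,\K_{\pa(v)\cup v},r)\) has the stated conditional. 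That last route is the one we would use.
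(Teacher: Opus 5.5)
Your argument is correct, and its core is the paper's proof. You factor each of the three coefficients in \eqref{PMF_hg} along the DAG, obtain for each $v$ the factor $\binom{r-|\n_{\pa(v)}|}{n_v}\binom{M-r-|\K_{\pa(v)}|+|\n_{\pa(v)}|}{K_v-n_v}\big/\binom{M-|\K_{\pa(v)}|}{K_v}$, and recognize it, via the symmetric form, as a classical hypergeometric PMF.

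The first difference is in what is cited. The paper takes the DAG factorization of the graph-multinomial coefficients directly from \cite[Lemma~2.15(2)]{GDir}. You instead re-derive it from the clique--separator form in Lemma~\ref{lem:coeff_factorization} by removing the last vertex of a topological order. This makes the argument self-contained, but you should state two points explicitly:
\begin{enumerate}
\item The identity needs $f(\emptyset)=1$. This holds for your $f_1,f_2,f_3$, and it is used when $\pa(v_p)=\emptyset$, because empty separators are excluded from $\Scal_G^-$.
\item The multiset of separators does not depend on the running-intersection ordering.
\end{enumerate}

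The second difference is the conditional statement. The paper reads it off from the DAG factorization. You instead use the clique marginal $\N_{\pa(v)\cup\{v\}}\sim\hg(M,\K_{\pa(v)\cup\{v\}},r)$ from Theorem~\ref{thm:properties_hg}(2), together with the classical urn conditional. This is valid and not circular, since that theorem is proved from the clique--separator factorization alone. It also avoids checking that the summation ranges in a sum-out argument match the supports of the individual factors.

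Your intermediate remark that one can choose a topological order placing $\pa(v)$ first is false in general, because $\pa(v)$ need not be ancestral (e.g.\ $a\to b\to v$). Since your final argument does not rely on it, no gap results.
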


\subsection{The graphical negative hypergeometric distribution}

We now introduce the graphical negative hypergeometric distribution, which provides the fourth member of the family of graphical count distributions.

\begin{definition}[Graphical negative hypergeometric distribution]
\label{def:ngh_G}
Let $G=(V,E)$ be a decomposable graph. A random vector $\N$ has the $G$-negative hypergeometric distribution with parameters $(M, \K, r)\in \Natural_+\times \Natural^V\times \R_+$ satisfying 
\(
M \geq r
\)
if its PMF is
\[ 
\P(\N=\n) = \frac{\gbinomm{|\n|+r-1}{\n}{G} \gbinomm{M-r+|\K|-|\n|}{\K-\n}{G}}{\gbinomm{M+|\K|}{\K}{G}}, \quad \n \in S^{\nhg_G}_{M,\K,r}, \]
where the support is 
\[
S^{\nhg_G}_{M,\K,r} = \{\n \in \Natural^V \colon 0 \le n_v \le K_v \text{ for all } v \in V \}.
\]
We write $\N \sim \nhg_G(M, \K, r)$.
\end{definition}

Here \(r>0\) is not required to be an integer. For
\(r\in\{1,\ldots,M\}\), the distribution admits the failure-stopped
sampling interpretation developed in Section~\ref{sec:clique-nhg},
whereas noninteger admissible values of \(r\) provide an analytic extension.

We provide results analogous to those for the graphical hypergeometric distribution below. 

\begin{proposition}[Connection to $\nm_G$]\label{prop:nhg}
Let \(M\geq r\), $\N_1 \sim \nm_G(r, \x)$ and $\N_2 \sim \nm_G(M-r+1, \x)$ be independent random vectors. Then the conditional distribution of $\N_1$ given their sum is $G$-negative hypergeometric:
\[ \N_1 \mid \N_1 + \N_2 = \K \sim \nhg_G(M,\K,r). \]
\end{proposition}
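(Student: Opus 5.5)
The plan is to compute the conditional law directly from the probability mass functions. By independence, for $\n$ with $\P(\N_1+\N_2=\K)>0$,
\[
\P(\N_1=\n\mid \N_1+\N_2=\K)=\frac{\P(\N_1=\n)\,\P(\N_2=\K-\n)}{\P(\N_1+\N_2=\K)}.
\]
The numerator equals
\[
\gbinomm{|\n|+r-1}{\n}{G}\gbinomm{|\K|-|\n|+M-r}{\K-\n}{G}\,\x^{\K}\,\Delta_G(\x)^{M+1}.
\]
Both factors are supported on $\Natural^V$, so this is nonzero exactly when $\n\in S^{\nhg_G}_{M,\K,r}$. The second coefficient is the second-type coefficient with $s=M-r+1$, and indeed $|\K-\n|+s-1=M-r+|\K|-|\n|$, which matches the notation of Definition~\ref{def:ngh_G}.

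The key step is the convolution identity for second-type coefficients. Multiplying the formal power series $\Delta_G(\x)^{-r}$ and $\Delta_G(\x)^{-(M-r+1)}$ gives $\Delta_G(\x)^{-(M+1)}$. Comparing coefficients of $\x^{\K}$ yields
\[
\sum_{\n\le \K}\gbinomm{|\n|+r-1}{\n}{G}\gbinomm{|\K|-|\n|+M-r}{\K-\n}{G}=\gbinomm{|\K|+M}{\K}{G}.
\]
This is the coefficient with $s=M+1$, matching the denominator in Definition~\ref{def:ngh_G}. The identity holds at the level of formal power series, and hence also for convergent series on $M_G$.

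With this identity, the law of $\N_1+\N_2$ is
\[
\P(\N_1+\N_2=\K)=\gbinomm{|\K|+M}{\K}{G}\x^{\K}\Delta_G(\x)^{M+1}.
\]
That is, $\N_1+\N_2\sim\nm_G(M+1,\x)$, and in dividing, the factors $\x^{\K}\Delta_G(\x)^{M+1}$ cancel. What remains is exactly the PMF of $\nhg_G(M,\K,r)$, and it does not depend on $\x$.

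The only delicate point is to check that the conditioning event has positive probability, i.e.\ that $\gbinomm{|\K|+M}{\K}{G}>0$. Lemma~\ref{lem:coeff_factorization} expresses this coefficient as a ratio of products of positive generalized binomials, since $s=M+1>0$. Alternatively, the sum contains the term $\n=\underline{0}$, which is positive. The hypothesis $M\ge r$ guarantees $M-r+1>0$, so $\nm_G(M-r+1,\x)$ is well defined. I expect no real obstacle beyond stating the convolution identity carefully.
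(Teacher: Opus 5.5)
Your proposal is correct and follows the same route as the paper: you write the conditional PMF via independence, identify the denominator with the $\nm_G(M+1,\x)$ PMF, and cancel the common factors $\x^{\K}\Delta_G(\x)^{M+1}$. The paper uses $\N_1+\N_2\sim\nm_G(M+1,\x)$ without comment, while you justify it through the convolution identity coming from $\Delta_G(\x)^{-r}\Delta_G(\x)^{-(M-r+1)}=\Delta_G(\x)^{-(M+1)}$, which is a welcome addition.
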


\begin{theorem}\label{thm:properties_nhg}
Let $\N \sim \nhg_G(M, \K, r)$ on a decomposable graph $G$.
\begin{enumerate}
    \item The distribution of $\N$ has the global Markov property with respect to $G$.
    \item For any clique $C \in \Ccal_G$, the marginal distribution of $\N_C$ is the classical multivariate negative hypergeometric distribution, $\N_C \sim \nhg(M, \K_C, r)$.
\end{enumerate}
\end{theorem}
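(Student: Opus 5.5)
The plan is to insert the factorization of Lemma~\ref{lem:coeff_factorization} into each of the three second-type coefficients in the PMF of Definition~\ref{def:ngh_G}. This yields a clique--separator factorization of the PMF. Both claims are then obtained from that factorization by a direct clique-level computation.

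For a set \(A\subseteq V\) define
\[
h_A(\n_A)=\frac{\binom{|\n_A|+r-1}{\n_A}\binom{M-r+|\K_A|-|\n_A|}{\K_A-\n_A}}{\binom{M+|\K_A|}{\K_A}},
\]
where the binomials are the generalized ones from Section~\ref{sec:preliminaries}. Substituting Lemma~\ref{lem:coeff_factorization} into the numerator and the denominator and regrouping gives, for \(\n\) in the support,
\[
\P(\N=\n)=\frac{\prod_{C\in\Ccal_G^+}h_C(\n_C)}{\prod_{S\in\Scal_G^-}h_S(\n_S)^{\nu_S}}.
\]
One bookkeeping point needs checking. The top argument of the second coefficient is \(M-r+|\K|-|\n|\), which equals \(|\K-\n|+(M-r+1)-1\). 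So the lemma applies with \(s=M-r+1>0\), and on a subset \(A\) the top argument becomes \(M-r+|\K_A|-|\n_A|\), as required. The denominator is handled the same way with \(s=M+1\).

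I will then show that \(h_C\) is the PMF of the classical multivariate negative hypergeometric law \(\nhg(M,\K_C,r)\) on \(\{\n_C\colon \n_C\le\K_C\}\). This follows because the complete graph \(G_C\) recovers the classical distribution. Equivalently, it is the classical Vandermonde-type identity
\[
\sum_{\n_C\le \K_C}\binom{|\n_C|+r-1}{\n_C}\binom{M-r+|\K_C|-|\n_C|}{\K_C-\n_C}=\binom{M+|\K_C|}{\K_C},
\]
obtained by comparing coefficients of \(\x^{\K_C}\) in \((1-|\x|)^{-r}(1-|\x|)^{-(M-r+1)}=(1-|\x|)^{-(M+1)}\).

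The main step, and the main obstacle, is showing that the clique marginals of \(\N\) really are \(h_C\). The factorization alone shows only that the PMF has the product form, not that its factors are the true marginals. I will use a perfect ordering of the maximal cliques \(C_1,\dots,C_K\) with separators \(S_k\) and sum out the vertices \(C_K\setminus S_K\) first. These vertices appear only in the factor \(h_{C_K}\). Given \(\n_{S_K}\), I need
\[
\sum_{\n_{C_K\setminus S_K}} h_{C_K}(\n_{C_K})=h_{S_K}(\n_{S_K}),
\]
that is, the marginal of a classical multivariate negative hypergeometric vector on a subset is again of that type with the restricted \(\K\). I plan to prove this with the generating-function identity above by merging the coordinates outside \(S_K\) into a single one. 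This works because the coefficients depend on those coordinates only through \(|\n_{C_K\setminus S_K}|\) and \(|\K_{C_K\setminus S_K}|\) via a multinomial factor.

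The summed factor then cancels one separator term, and induction on \(K\) gives the joint law of \(\N_{C_1\cup\dots\cup C_{K-1}}\) with the same product form for the induced decomposable subgraph. Running the induction on any perfect ordering that starts from a given maximal clique shows that \(\N_C\sim\nhg(M,\K_C,r)\). For non-maximal cliques the claim follows by marginalizing once more. This proves~(2).

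For~(1), the display is now exactly the clique--separator factorization \eqref{eq:clique-separator-factorization} with the true marginals. The paper notes that such a factorization implies the global Markov property for decomposable \(G\). Because the support is the product set \(\prod_v\{0,\dots,K_v\}\) and every factor is positive there, no structural-zero issues arise. As an alternative route for~(1), Proposition~\ref{prop:nhg} together with the Markov property of \(\nm_G\) would not obviously suffice, so I prefer the direct factorization argument.
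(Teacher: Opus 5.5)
Your proposal is correct and follows essentially the same route as the paper. Both arguments substitute Lemma~\ref{lem:coeff_factorization} into the three second-type coefficients to obtain a clique--separator product of classical $\nhg(M,\K_C,r)$ factors, then use subset-marginal consistency of the classical negative hypergeometric law together with the standard clique-separator construction to identify these factors as the true maximal-clique marginals, which yields both claims. You go further than the paper by proving what it only cites as standard (the merging-plus-Vandermonde check of marginal consistency and the explicit peeling induction along a perfect ordering), and you establish (2) before invoking the factorization-implies-Markov observation for (1); this is a small but sensible tightening of the paper's ordering.
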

The properties are established using similar arguments as for the $\hg_G$ model, leveraging the factorization properties of the second-type graph-multinomial coefficients.

\begin{theorem}\label{thm:dag-nhg}
    Assume that \(G\) is a decomposable graph. Let \(\Gcal\) be a moral DAG with skeleton \(G\). If \(\N\sim \nhg_G( M,\K, r)\), then 
    \begin{align} \label{eq-Markov_prop-nhg}
    \P(\N=\n)=\prod_{v\in V} \nhg\left( M + |\K_{\pa(v)}|, K_v,r + |\n_{\pa(v)}| \right)(n_v), \quad \n \in S^{\nhg_G}_{M,\K,r}
    \end{align}
    where \(\nhg\left(M, K_v, r\right)(n_v)\) is the PMF for the classical negative hypergeometric distribution, i.e.,
\[
\nhg\left( M,K_v, r\right)(n_v) = \frac{\binom{n_v+r-1}{n_v}\binom{M-r+K_v-n_v}{K_v-n_v}}{\binom{M+K_v}{K_v}}.
\]
    In particular, for every \(v\in V\) and every
\(\n_{\pa(v)}\) such that
\(
\P(\N_{\pa(v)}=\n_{\pa(v)})>0,
\)
we obtain
\[
    \P(N_v=n_v| \N_{\pa(v)}=\n_{\pa(v)})=\nhg\left( M + |\K_{\pa(v)}|,K_v, r + |\n_{\pa(v)}| \right)(n_v).
\]
\end{theorem}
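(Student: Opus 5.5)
The plan is to mirror the argument for Theorem~\ref{thm:dag-hg}. We first express the $\nhg_G$ PMF through Lemma~\ref{lem:coeff_factorization}, and then regroup the resulting clique--separator product along a perfect elimination order given by $\Gcal$. Concretely, Lemma~\ref{lem:coeff_factorization} applies to the three second-type coefficients in Definition~\ref{def:ngh_G}, with parameters $(\n,r)$, $(\K-\n,M-r+1)$ and $(\K,M+1)$. This writes $\P(\N=\n)$ as
\[
\frac{\prod_{C\in\Ccal_G^+} f_C(\n_C)}{\prod_{S\in\Scal_G^-} f_S(\n_S)^{\nu_S}},
\qquad
f_A(\n_A):=\frac{\binom{|\n_A|+r-1}{\n_A}\binom{M-r+|\K_A|-|\n_A|}{\K_A-\n_A}}{\binom{M+|\K_A|}{\K_A}}.
\]
Here $f_A$ is exactly the classical multivariate negative hypergeometric PMF $\nhg(M,\K_A,r)(\n_A)$.

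The key local identity is a chain rule for the classical multivariate negative hypergeometric law. For a clique $A$ and a vertex $v\notin A$ with $A\cup\{v\}$ a clique, it reads
\[
\frac{f_{A\cup\{v\}}(\n_{A\cup\{v\}})}{f_A(\n_A)}=\nhg\bigl(M+|\K_A|,K_v,r+|\n_A|\bigr)(n_v).
\]
This is a direct computation with the convention $\binom{a}{\vect b}=\Gamma(a+1)/(\Gamma(a-|\vect b|+1)\prod b_i!)$. We split
$\binom{|\n_A|+n_v+r-1}{\n_{A\cup\{v\}}}=\binom{|\n_A|+r-1}{\n_A}\binom{n_v+(r+|\n_A|)-1}{n_v}$. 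We split the other two multinomial coefficients in the same way, and the Gamma factors then telescope. Equivalently, it is the classical fact that in Pólya-urn / negative-hypergeometric sampling, conditioning on the counts in $A$ leaves a univariate negative hypergeometric law with shifted parameters. With $A=\emptyset$ this also gives $f_{\{v\}}=\nhg(M,K_v,r)$.

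Next comes the combinatorial regrouping. Because $\Gcal$ is moral with decomposable skeleton $G$, a topological order $v_1,\dots,v_p$ of $\Gcal$ yields a perfect elimination ordering: each $\pa(v_i)$ is a clique, and $\pa(v_i)\cup\{v_i\}$ lies in a maximal clique. We use the standard fact, already used for Theorem~\ref{thm:dag-hg}, that for any family $\{f_A\}$ of functions on cliques that is consistent under marginalization in the sense of the chain rule above, the ratio $\prod_C f_C/\prod_S f_S^{\nu_S}$ telescopes. Writing each maximal clique $C$ as the union of its vertices in order, we get $f_C=\prod f_{\pa(v)\cup\{v\}}/f_{\pa(v)}$ over the vertices $v$ of $C$ not in its separator. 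The separator terms cancel the duplicated initial factors. The result is $\prod_{v}f_{\pa(v)\cup\{v\}}(\cdot)/f_{\pa(v)}(\cdot)$, which by the local identity is the claimed product. Within a clique, the parents of the later vertices restricted to that clique are exactly the preceding vertices, by perfectness.

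For the conditional statement, Theorem~\ref{thm:properties_nhg}(2) gives $\P(\N_{\pa(v)\cup\{v\}}=\cdot)=f_{\pa(v)\cup\{v\}}$ and $\P(\N_{\pa(v)}=\cdot)=f_{\pa(v)}$, since both sets are cliques. Dividing and using the local identity yields the conditional PMF.

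The main obstacle is the bookkeeping in the regrouping step. We must check that the parent sets of $\Gcal$ interact correctly with the running-intersection ordering of maximal cliques, and handle the fact that different moral DAGs with the same skeleton may order the cliques differently. I would first reduce to the case where the DAG order is compatible with a RIP ordering of cliques, in which the telescoping is clean. I would then argue that the product formula does not depend on the choice of moral DAG, because each such product equals the same clique--separator expression. The support issues are mild, since the support is the full box $0\le n_v\le K_v$.
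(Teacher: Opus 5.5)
Your argument is correct in substance, but it takes a different route from the paper's proof.

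\textbf{How the two routes differ.} The paper never passes through the clique--separator form. It applies the coefficient-level DAG factorization \cite[Lemma~2.15(2)]{GDir} to the three coefficients, with $s=r$, $M-r+1$ and $M+1$. That lemma writes each second-type coefficient as $\prod_{v\in V}\binom{n_v+|\n_{\pa(v)}|+s-1}{n_v}$. The paper then recognizes each quotient of factors as a univariate $\nhg$ PMF and reads the conditional law off the resulting DAG factorization.

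You instead start from Lemma~\ref{lem:coeff_factorization}, prove the local chain rule $f_{A\cup\{v\}}/f_A=\nhg(M+|\K_A|,K_v,r+|\n_A|)(n_v)$, and regroup. Your Gamma-function bookkeeping is right, and $f_\emptyset=1$. In effect you re-derive the coefficient-level DAG factorization from the clique--separator one. This keeps the proof self-contained relative to the present paper and isolates the local identity as the only model-specific input, so the same argument gives Theorem~\ref{thm:dag-hg}. The paper's version, by contrast, is a one-line citation. Your derivation of the conditional law from the clique marginals in Theorem~\ref{thm:properties_nhg}(2) is also fine.

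\textbf{The regrouping step is not yet a proof.} You flag this step yourself, and as written it does not go through.
\begin{itemize}
\item The identity ``$f_C=\prod f_{\pa(v)\cup\{v\}}/f_{\pa(v)}$ over $v\in C\setminus S$'' holds only in the form $f_C/f_S=\cdots$.
\item Even then it requires every vertex of $S$ to precede each $v\in C\setminus S$ in the DAG order, and $\pa(v)\subseteq C$ for each such $v$.
\item Your fallback is circular: you say the product does not depend on the moral DAG ``because each such product equals the same clique--separator expression'', but that equality is exactly what has to be shown.
\item The paper's proof of Theorem~\ref{thm:dag-hg} does not use such a telescoping fact either; it also cites \cite[Lemma~2.15(2)]{GDir}. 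So you cannot borrow the step from there.
\end{itemize}

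\textbf{A clean way to close it.} Induct along a topological order $v_1,\dots,v_p$ of $\Gcal$. Since the skeleton of $\Gcal$ is $G$, $\pa(v_i)$ is exactly the neighbourhood of $v_i$ in $G_{\{v_1,\dots,v_i\}}$, and it is a clique. There are two cases.
\begin{itemize}
\item If $\pa(v_i)$ is a maximal clique of $G_{\{v_1,\dots,v_{i-1}\}}$, it is enlarged to $\pa(v_i)\cup\{v_i\}$ and all separators stay unchanged.
\item Otherwise, $\pa(v_i)\cup\{v_i\}$ is appended to a running-intersection ordering as a new maximal clique with separator $\pa(v_i)$. This separator is empty, contributing $f_\emptyset=1$, when $v_i$ starts a new component.
\end{itemize}
In both cases the clique--separator expression is multiplied by $f_{\pa(v_i)\cup\{v_i\}}/f_{\pa(v_i)}$. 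The argument works for any clique-indexed family with $f_\emptyset=1$. So you need neither a consistency hypothesis nor an independence-of-DAG argument.
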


\section{Probabilistic interpretation of the four graphical models}\label{sec:interpretation}

Before turning to the individual constructions, let us emphasize why such
probabilistic interpretations are useful. The formulas defining the four
graphical distributions are compact algebraic expressions involving the graph
polynomials \(\delta_G\) and \(\Delta_G\) and the corresponding
graph-multinomial coefficients. A probabilistic construction explains what kind
of random mechanism can produce these formulas. In particular, it shows that the
models are not only formal analogues of the classical multinomial, negative
multinomial, hypergeometric, and negative hypergeometric laws, but also arise
from natural sampling schemes in which the admissible outcomes are constrained
by a graph.

The common feature of these constructions is that observations are built from
independent sets of \(G\), equivalently from cliques of the complement graph
\(G^\ast\). The multinomial and negative multinomial models correspond to
sampling with replacement, while the hypergeometric and negative hypergeometric
models correspond to the analogous sampling schemes without replacement, or to
conditioning on fixed aggregate contents. This point of view is important for applications: in these sections we
discuss several settings in which independent-set-valued configurations arise
naturally: feasible schedules in wireless networks, mutually
exclusive alterations in cancer genomics, pairwise loss and resource-sharing
systems, and abstract polymer models.

\subsection{\texorpdfstring{Probabilistic interpretation of \(\mult_G\)}
{Probabilistic interpretation of multG}}

The classical multinomial distribution arises from repeated independent sampling with replacement from a finite collection of categories. In the graphical setting, a single draw is replaced by a graph-constrained admissible configuration: a clique of the complement graph \(G^*\), or equivalently an independent set of \(G\). The distribution \(\mult_G(1,\y)\) describes one such random configuration, while \(\mult_G(r,\y)\) describes the aggregate counts obtained from \(r\) independent configurations.

Let \(G=(V,E)\) be a finite simple graph and let \(G^*=(V,E^*)\) denote its complement.
For a clique \(C\in\Ccal_{G^*}\), let \(\one_C\in\{0,1\}^V\) denote its indicator vector. As discussed in Remark \ref{rem:mult_one_arbitrary_G}, the following construction of $\mult_G(1,\y)$ holds for arbitrary graphs, not necessarily decomposable.

\begin{theorem}[Theorem 4.3 in \cite{GDir}]
\label{thm-mult-proba-interpretation}
	For \(\y\in (0,\infty)^V\), let \(\ell\) be a random variable valued in \(\mathcal{C}_{G^*}\), with distribution
	\begin{align}\label{eq:distell}
    \mathbb{P}(\ell=C)=\frac{\prod_{v\in C}y_v}{\delta_G(\y)},\qquad  C\in \mathcal{C}_{G^*}.
    \end{align}
	Then 
    $\one_\ell\sim \mult_G(1,\y)$.
\end{theorem}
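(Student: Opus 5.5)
The argument is a direct computation of the law of $\one_\ell$ from \eqref{eq:distell}, compared with the PMF of $\mult_G(1,\y)$ from Remark~\ref{rem:mult_one_arbitrary_G}.

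First, we check that \eqref{eq:distell} is a probability distribution. By definition,
\[
\delta_G(\y)=\sum_{C\in\Ccal_{G^*}}\prod_{v\in C}y_v .
\]
So the weights $\prod_{v\in C}y_v/\delta_G(\y)$ are positive and sum to one over $\Ccal_{G^*}$. The empty clique contributes the term $1$.

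Next, we describe the map $C\mapsto\one_C$ from $\Ccal_{G^*}$ to $\{0,1\}^V$. It is clearly injective. Its image is exactly $\mathbb N_{G,1}$. Indeed, a vector $\k\in\{0,1\}^V$ equals $\one_C$ for a clique $C$ of $G^*$ if and only if its support $C$ contains no edge of $G$. That condition says precisely that $k_i+k_j\le1$ for every $\{i,j\}\in E$, which is the description of $\mathbb N_{G,1}$ in Remark~\ref{rem:mult_one_arbitrary_G}. For $\k\in\{0,1\}^V$ this also matches $\max_{C\in\Ccal_G^+}|\k_C|\le1$, since a clique of $G$ containing two vertices of the support would contain an edge of $G$.

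Now take $\k\in\mathbb N_{G,1}$ with support $C$. By injectivity,
\[
\P(\one_\ell=\k)=\P(\ell=C)=\frac{\prod_{v\in C}y_v}{\delta_G(\y)}=\frac{\y^{\k}}{\delta_G(\y)}.
\]
For any other $\k$ the probability is zero. Remark~\ref{rem:mult_one_arbitrary_G} gives $\gbinom{1}{\k}{G}=\mathbf 1_{\k\in\mathbb N_{G,1}}$. Hence this coincides with $\gbinom{1}{\k}{G}\y^{\k}\delta_G(\y)^{-1}$, the PMF of $\mult_G(1,\y)$, and so $\one_\ell\sim\mult_G(1,\y)$.

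For a non-decomposable $G$ we take the same formula as the definition of $\mult_G(1,\y)$, as Remark~\ref{rem:mult_one_arbitrary_G} does, so the argument is unchanged. The only point needing care is the identification of $\{\one_C:C\in\Ccal_{G^*}\}$ with $\mathbb N_{G,1}$, and this is elementary.
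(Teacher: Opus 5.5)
Your proof is correct and follows essentially the argument the paper uses in Remark~\ref{rem:mult_one_arbitrary_G}: the bijection $C\mapsto\one_C$ from $\Ccal_{G^*}$ onto $\mathbb N_{G,1}$ gives $\delta_G(\y)=\sum_{\k\in\mathbb N_{G,1}}\y^{\k}$, hence $\gbinom{1}{\k}{G}=\mathbf 1_{\k\in\mathbb N_{G,1}}$, and then the law of $\one_\ell$ and the PMF of $\mult_G(1,\y)$ agree term by term. You leave one small step implicit: $\max_{C\in\Ccal_G^+}|\k_C|\le 1$ already forces $\k\in\{0,1\}^V$, since every vertex lies in some maximal clique; this is harmless.
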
	
If $G$ is decomposable, then by \cite[Corollary 3.6]{GDir}, for every $r\in\mathbb N_+$,
\[
\sum_{j=1}^r \mathbf 1_{\ell_j}
\sim
\mult_G(r,\y),
\]
where $(\ell_1,\ldots,\ell_r)$ are independent and identically distributed according to \eqref{eq:distell}. Thus, in the decomposable case, $\mult_G(r,\y)$ admits the following probabilistic interpretation: it is the distribution of the vertex-count vector obtained by sampling $r$ independent cliques of the complement graph $G^*$, with clique $C$ chosen with weight proportional to $\prod_{v\in C}y_v$.

It is also worth recalling the direct connection with the hard-core model from
statistical physics \cite{Scott_2005}; see also \cite[Section~4.1]{GDir}. Given activities
\(\underline{\lambda}=(\lambda_v)_{v\in V}\in(0,\infty)^V\), the hard-core model on
\(G\) is the probability distribution on the family \(\mathcal I(G)\) of
independent sets of \(G\), where a set is independent if no two of its vertices
are adjacent, defined by
\[
    \mathbb P(I)
    =
    \frac{1}{Z_G(\underline{\lambda})}
    \prod_{v\in I}\lambda_v,
\qquad 
    Z_G(\underline{\lambda})
    =
    \sum_{I\in\mathcal I(G)}
    \prod_{v\in I}\lambda_v .
\]
Since independent sets of \(G\) are precisely cliques of the complement graph
\(G^\ast\), we have $Z_G(\underline{\lambda})=\delta_G(\underline{\lambda})$.
Consequently the indicator vector \(\one_I\) of a hard-core sample satisfies $\one_I\sim \mult_G(1,\underline{\lambda})$.

\subsection{\texorpdfstring{Probabilistic interpretation of \(\hg_G\)}
{Probabilistic interpretation of hgG}}

A first interpretation of \(\hg_G\) follows from Proposition~\ref{prop:hg}: it is the conditional distribution of one graphical multinomial count vector given the sum of two independent graphical multinomial count vectors. However, that construction involves an auxiliary parameter vector \(\y\), whereas the distribution \(\hg_G(M,\K,r)\) itself does not depend on \(\y\). We now give a parameter-free sampling interpretation.
\begin{theorem}\label{thm:interp_hg}
Let \(G\) be a decomposable graph. Fix \(M\in\Natural_+\), \(r\in\{1,\dots,M-1\}\), and
\(\K\in\mathbb N^V\) such that
\(
M\ge \max_{C\in\Ccal_G^+}|\K_C|.
\)
Consider the set of ordered \(M\)-tuples of cliques of \(G^*\) with total content \(\K\):
\[
\mathcal B_M(\K)
=
\left\{
(C^{(1)},\dots,C^{(M)})\in \Ccal_{G^*}^M:
\sum_{j=1}^M \one_{C^{(j)}}=\K
\right\}.
\]
Choose an element of \(\mathcal B_M(\K)\) uniformly at random and define $\N=\sum_{j=1}^r \one_{C^{(j)}}$.
Then
\[
\N\sim \hg_G(M,\K,r).
\]
\end{theorem}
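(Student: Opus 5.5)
The plan is a counting argument. The key identity is that the number of ordered \(m\)-tuples of cliques of \(G^*\) with total content \(\n\) equals \(\gbinom{m}{\n}{G}\). Expanding \(\delta_G(\y)^m=\bigl(\sum_{C\in\Ccal_{G^*}}\y^{\one_C}\bigr)^m\) as a product of \(m\) factors gives
\[
\delta_G(\y)^m=\sum_{(C^{(1)},\dots,C^{(m)})\in\Ccal_{G^*}^m}\y^{\sum_j \one_{C^{(j)}}}.
\]
Comparing coefficients of \(\y^\n\) with the definition of the first-type graph-multinomial coefficients then yields
\[
|\mathcal B_m(\n)|=\gbinom{m}{\n}{G}\quad\text{for every } m\in\Natural_+ \text{ and } \n\in\Natural^V.
\]
This holds for any graph; decomposability is not needed at this stage. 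Taking \(m=M\), the hypothesis \(M\ge\max_{C}|\K_C|\) guarantees, through Lemma~\ref{lem:coeff_factorization}, that \(\gbinom{M}{\K}{G}>0\). So \(\mathcal B_M(\K)\) is nonempty and the uniform choice makes sense.

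Next I would split an \(M\)-tuple into its first \(r\) and last \(M-r\) entries. This gives a bijection
\[
\mathcal B_M(\K)\;\longleftrightarrow\;\bigsqcup_{\n}\mathcal B_r(\n)\times\mathcal B_{M-r}(\K-\n),
\]
where the union runs over \(\n\in\Natural^V\) with \(\n\le\K\) coordinatewise. The event \(\{\N=\n\}\) corresponds exactly to the \(\n\)-th piece. Under the uniform law we therefore get
\[
\P(\N=\n)=\frac{|\mathcal B_r(\n)|\,|\mathcal B_{M-r}(\K-\n)|}{|\mathcal B_M(\K)|}
=\frac{\gbinom{r}{\n}{G}\gbinom{M-r}{\K-\n}{G}}{\gbinom{M}{\K}{G}},
\]
which is the PMF \eqref{PMF_hg}.

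It remains to check that this is positive exactly on \(S^{\hg_G}_{M,\K,r}\) and vanishes elsewhere. This is the only step needing care. Again by Lemma~\ref{lem:coeff_factorization}, for decomposable \(G\) we have \(\gbinom{m}{\n}{G}>0\) iff \(|\n_C|\le m\) for all maximal cliques \(C\), hence for all cliques. The numerator is therefore nonzero iff \(\n\le\K\), \(|\n_C|\le r\), and \(|\K_C|-|\n_C|\le M-r\) for all \(C\in\Ccal_G\). These conditions are precisely the defining conditions of \(S^{\hg_G}_{M,\K,r}\).

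The bijection also shows that the probabilities sum to one. This gives the convolution identity \(\sum_\n\gbinom{r}{\n}{G}\gbinom{M-r}{\K-\n}{G}=\gbinom{M}{\K}{G}\) as a by-product; it is equally immediate from \(\delta_G^r\delta_G^{M-r}=\delta_G^M\).

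The main point requiring attention is the positivity and support statement. The combinatorial count itself is immediate; the support check only uses the factorization in Lemma~\ref{lem:coeff_factorization}, with the convention that \(\binom{r}{\n_C}=0\) when \(|\n_C|>r\).
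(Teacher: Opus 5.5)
Your proposal is correct and is essentially the paper's proof: both identify $\gbinom{m}{\n}{G}$ as the number of ordered $m$-tuples of cliques of $G^*$ with content $\n$ by expanding $\delta_G(\y)^m$, and then count the event $\{\N=\n\}$ by splitting an $M$-tuple into its first $r$ and last $M-r$ entries. Your nonemptiness and support checks are not written out in the paper, and they are fine. For the vanishing direction, though, it is cleaner to argue directly: an independent set of $G$ meets each clique of $G$ in at most one vertex, so $|\n_C|\le m$ is forced. Reading this off Lemma~\ref{lem:coeff_factorization} is less clean, because its quotient becomes $0/0$ when a separator also violates the bound.
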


\subsection{\texorpdfstring{Probabilistic interpretation of \(\nm_G\)}
{Probabilistic interpretation of nmG}}\label{sec:nmg_int}
We first recall the probabilistic interpretation of the \(\nm_G\) model from \cite{GDir} and we then provide a new interpretation.

We fix a finite decomposable graph \(G=(V,E)\). We regard $V$ as an alphabet, and refer to its elements as letters. Let $\mathfrak V$ be the free monoid generated by $V$ under concatenation; thus $\mathfrak V$ is the set of all finite words over the alphabet~$V$.
The complement graph \(G^*=(V,E^*)\) determines a partial commutation relation on
the alphabet \(V\): two distinct letters \(i_1,i_2\in V\) are allowed to commute
whenever \(\{i_1,i_2\}\in E^*\). Let \(\equiv\) be the smallest monoid congruence on \(\mathfrak V\) such that
\[
    w_1 i_1 i_2 w_2
    \equiv
    w_1 i_2 i_1 w_2
\]
for all \(w_1,w_2\in\mathfrak V\) and all \(\{i_1,i_2\}\in E^*\). We define
\[
    L:=\mathfrak V/\equiv .
\]
We say that \(L\) is the free  quotient monoid induced by \(G=(V,E)\). Its elements are denoted by \([w]\), where \(w\in\mathfrak V\) is any representative of the
equivalence class.

The following representation of the \(\nm_G\) distribution was obtained in
\cite[Theorem~4.2]{GDir}: for \(\x\in M_G\), let \(\ell\) be an \(L\)-valued
random variable with distribution
\[
    \mathbb P(\ell=[w])
    =
    \frac{\prod_{j=1}^{|w|} x_{w_j} }
    {\sum_{[w']\in L} \prod_{i=1}^{|w'|} x_{w'_i}},
    \qquad [w]\in L,
\]
where \(w=w_1\cdots w_{|w|}\in\mathfrak{V}\) is any representative of the class \([w]\), and \(|w|\) denotes the length of \(w\). The expression is well defined because commuting adjacent letters does not change the product.
Let \(\varepsilon([w])\in\Natural^V\) record the number of occurrences of
each letter in \([w]\in L\). Then
\[
    \varepsilon(\ell)\sim\nm_G(1,\x).
\]
This result is based on the Cartier--Foata theorem \cite{cartier1969commutation}.
We note, however, that this construction does not directly exhibit a
``sampling until failure'' scheme, as one might expect for a negative
multinomial distribution.


We now introduce a ``sample cliques until failure'' representation of \(\nm_G(1,\x)\). In the free quotient monoid $L$ associated with \(G\), two letters commute if and only if they are adjacent in \(G^*\). Thus the cliques of \(G^*\) are precisely the sets of pairwise commuting letters.

Let
\[
    \Acal_{G^*}:=\Ccal_{G^*}\setminus\{\emptyset\}.
\]
We use the standard Cartier--Foata admissibility relation on cliques: for \(C,D\in\Acal_{G^*}\), write
\(C\to D\) if
\begin{equation}
\label{eq:CF-adm}
    \forall\,v\in D\ \exists\,u\in C
    \quad\text{such that}\quad
    u=v \ \text{or}\ \{u,v\}\in E.
\end{equation}
In other words, each vertex in the next clique \(D\) must either already belong to the current clique \(C\), or be adjacent in \(G\) to at least one vertex of \(C\).

\begin{lemma}[{\cite[Section 2.3]{DiekertMetivier1997}}]
\label{lem:CF-normal-form}
Let \(L\) be the free quotient monoid induced by $G$. 
Then every nonempty element \(\ell\in L\) admits a unique factorization
\[
    \ell=C_1C_2\cdots C_T,
    \qquad
    C_t\in\Acal_{G^\ast},\quad C_t\to C_{t+1},\quad t=1,\dots,T-1.
\]
\end{lemma}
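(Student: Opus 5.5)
The plan is to prove existence and uniqueness simultaneously by induction on the length \(|\ell|\), using the standard projection description of the trace monoid \(L\). Call two letters \emph{dependent} if they are equal or adjacent in \(G\); equivalently, they do not commute in \(L\). For \(a,b\in V\) and a word \(w\in\mathfrak V\), let \(\pi_{a,b}(w)\) be the subword of \(w\) obtained by erasing all letters other than \(a\) and \(b\).

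The first step is a preliminary fact: if \(a,b\) are dependent, then \(\pi_{a,b}(w)\) depends only on \([w]\). This holds because an elementary commutation swaps two independent letters, and these never both survive the projection. Conversely, I will use the classical projection lemma (Cori--Perrin / Diekert--Métivier): \(w\equiv w'\) if and only if \(\pi_{a,b}(w)=\pi_{a,b}(w')\) for every dependent pair \(a,b\). Two consequences follow.
\begin{itemize}
\item Left cancellation: \(xu\equiv xu'\) implies \(u\equiv u'\).
\item For a letter \(v\) occurring in \(\ell\), we have \(\ell=[v]\,\ell'\) for some \(\ell'\) if and only if \(v\) is \emph{minimal}. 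Here minimal means that, in some (equivalently any) representative, no letter dependent on \(v\) precedes the first occurrence of \(v\). This in particular excludes an earlier copy of \(v\) itself.
\end{itemize}

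For existence, let \(\min(\ell)\) be the set of minimal letters of the nonempty trace \(\ell\); it is nonempty, since the first letter of any representative is minimal. Two distinct minimal letters \(a,b\) must commute. Otherwise the invariant \(\pi_{a,b}(w)\) begins with one of them, say \(a\), so \(a\) precedes the first \(b\) and \(b\) is not minimal. Hence \(C_1:=\min(\ell)\in\Acal_{G^*}\). Moving these letters to the front one at a time gives \(\ell=C_1\ell'\), and we apply induction to \(\ell'\). If \(\ell'\) is nonempty, it yields a normal form \(\ell'=C_2\cdots C_T\), and it remains to check \(C_1\to C_2\). Take \(v\in C_2=\min(\ell')\). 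If \(v\) were independent of every letter of \(C_1\), then no letter dependent on \(v\) would precede its first occurrence in \(C_1\ell'\), so \(v\) would be minimal in \(\ell\). That would put \(v\in C_1\), and then \(v\) itself precedes the copy in \(C_2\), a contradiction. Hence some \(u\in C_1\) is dependent on \(v\), i.e.\ \(u=v\) or \(\{u,v\}\in E\), which is exactly \eqref{eq:CF-adm}.

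For uniqueness, suppose \(\ell=C_1\cdots C_T\) is any factorization with \(C_t\to C_{t+1}\). I claim \(C_1=\min(\ell)\). Letters of \(C_1\) are pairwise commuting and can be placed first, so they are minimal. Conversely, let \(v\in C_t\) with \(t\ge2\) be the first occurrence of a letter \(v\) that does not lie in \(C_1\). Admissibility gives \(u\in C_{t-1}\) dependent on \(v\) and occurring before it. This \(u\) cannot equal \(v\), since \(C_t\) contains the first occurrence of \(v\). Hence \(u\) is adjacent to \(v\) in \(G\), and by the projection invariance \(v\) is not minimal. Thus \(C_1\) is determined by \(\ell\); left cancellation then determines \(C_2\cdots C_T\), and induction finishes the proof.

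The main obstacle is the projection lemma itself, since every step relies on the invariance of relative order between dependent letters. If the lemma is not assumed from \cite{DiekertMetivier1997}, I would prove it by induction on word length. Given two words with equal projections, the first letter \(a\) of \(w\) must occur in \(w'\) with no letter dependent on \(a\) before it, judging by the projections \(\pi_{a,b}\). One can therefore commute it to the front of \(w'\), cancel it, and recurse.
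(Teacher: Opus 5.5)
The paper gives no argument for this lemma. It imports the Cartier--Foata normal form from \cite{DiekertMetivier1997} (going back to \cite{cartier1969commutation}) and uses it only to identify admissible clique sequences with elements of $L$ in the proofs of Theorems~\ref{prop:clique-chain-nm1} and~\ref{thm:interp_nhg}. Your proposal is a correct, self-contained proof along the standard trace-theoretic lines, so it is a genuinely different route from the paper's bare citation.

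The argument holds together as follows:
\begin{itemize}
\item The invariance of $\pi_{a,b}$ for dependent pairs makes minimality a property of the trace.
\item $\min(\ell)$ is a nonempty clique of $G^\ast$.
\item $\min(\ell)$ is forced to be the first step of every admissible factorization. Choosing the \emph{first} occurrence of $v$ is exactly what rules out the reflexive case $u=v$ in \eqref{eq:CF-adm}.
\item Left cancellation drives the induction, and it is the only place the nontrivial direction of the projection lemma is needed. Your inductive sketch of that direction is also sound.
\end{itemize}

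Compared with the citation, your route costs length but buys two things. First, it gives an intrinsic description of the normal form: each step is the set of minimal letters of what remains, which is precisely the layer structure the stopped chain of Theorem~\ref{prop:clique-chain-nm1} produces. Second, it shows that decomposability plays no role, even though the lemma sits in a subsection that fixes a decomposable $G$.

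One wording point. In the existence step, the contradiction is with $v\notin C_1$, which your hypothesis (that $v$ is independent of every letter of $C_1$) already forces. A letter may legitimately lie in both $C_1$ and $C_2$ (e.g.\ $\ell=[vv]$), so ``$v$ precedes the copy in $C_2$'' is contradictory only relative to that hypothesis.
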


Thus sampling from $L$ may equivalently be viewed as sampling a finite admissible sequence of nonempty cliques of \(G^*\).

For \(C\subseteq V\), write
\[
    \onb_G(C)
    :=
    C\cup\{v\in V\colon \exists u\in C\text{ such that }\{u,v\}\in E\}.
\]
In particular, $\onb_G(\emptyset)=\emptyset$.

Define for $C\in\Ccal_{G^*}$,
\begin{align}\label{eq:px}
    \Phi_{\x}(C)
    :=
    \Delta_{G_{V\setminus \onb_G(C)}}
    \bigl(\x_{V\setminus \onb_G(C)}\bigr),
    \qquad 
    p_{\x}(C):=\Phi_{\x}(C)\, \x^{\one_C}.
\end{align}
Note that $p_{\x}(\emptyset) = \Delta_G(\x)$. 
We have  
\[
V\setminus \onb_G(C) = 
    \{v\in V\setminus C\colon \{u,v\}\in E^\ast
      \text{ for every }u\in C\} = \bigcap_{u\in C}\nb_{G^*}(u). 
\]

Recall the definition of the set $M_G$ from \eqref{eq:MG}. 
\begin{lemma}\label{lem:propP} Let \(G=(V,E)\) be a decomposable graph and let \(\x\in M_G\). Then $p_{\x}$ is a probability distribution on $\Ccal_{G^\ast}$, i.e., 
    \[
   \sum_{C\in\Ccal_{G^*}}p_{\x}(C) = 1\qquad\mbox{and} \qquad p_{\x}(C)>0 \mbox{ for }C\in\Ccal_{G^*}.  
    \]
    Moreover,
\[
    \Phi_{\x}(C)
    =
    p_{\x}(\emptyset)
    +
    \sum_{D\colon C\to D}p_{\x}(D),
    \qquad C\in\Acal_{G^*}.
\]
\end{lemma}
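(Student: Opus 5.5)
Both claims reduce to the same inclusion--exclusion computation. We expand each $\Delta$ factor in $p_{\x}$ as a signed sum over cliques of $G^*$, recombine the resulting pairs of cliques into a single clique of $G^*$, and let an alternating binomial sum cancel everything except one term. The argument uses neither decomposability nor the hypothesis $\x\in M_G$, except in the positivity statement.

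\textbf{Positivity.} For $C\in\Ccal_{G^*}$ put $W_C:=V\setminus\onb_G(C)$. Then $\Phi_{\x}(C)=\Delta_{G_{W_C}}(\x_{W_C})$. This is strictly positive because $\x\in M_G$ requires $\Delta_{G_A}(\x_A)>0$ for every $A\subseteq V$; for $A=\emptyset$ we take $\Delta=1$. Since also $\x^{\one_C}>0$, we get $p_{\x}(C)>0$.

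\textbf{The key identity.} We prove that for every $U\subseteq V$,
\begin{equation*}
\sum_{D\in\Ccal_{G^*},\,D\subseteq U} p_{\x}(D)
=\Delta_{G_{V\setminus U}}(\x_{V\setminus U}).
\end{equation*}
Expand $\Phi_{\x}(D)=\sum_{E}(-1)^{|E|}\x^{\one_E}$, where $E$ ranges over cliques of $G^*$ contained in $W_D$. A vertex lies in $W_D$ exactly when it is outside $D$ and $G^*$-adjacent to every vertex of $D$. Hence the pairs $(D,E)$ with $D\in\Ccal_{G^*}$ and $E\in\Ccal_{G^*}$, $E\subseteq W_D$, correspond bijectively to pairs $(A,D)$ with $A\in\Ccal_{G^*}$ and $D\subseteq A$, via $A=D\sqcup E$. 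Indeed, any subset $D$ of a clique $A$ of $G^*$ is again a clique, and $A\setminus D\subseteq W_D$. Therefore the left-hand side equals
\begin{equation*}
\sum_{A\in\Ccal_{G^*}}\x^{\one_A}\sum_{D\subseteq A\cap U}(-1)^{|A|-|D|}
=\sum_{A\in\Ccal_{G^*}}(-1)^{|A|}\x^{\one_A}\,(1-1)^{|A\cap U|}.
\end{equation*}
The factor $(1-1)^{|A\cap U|}$ vanishes unless $A\cap U=\emptyset$. So only cliques $A\subseteq V\setminus U$ survive, and their sum is $\Delta_{G_{V\setminus U}}(\x_{V\setminus U})$.

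\textbf{Deriving the two claims.} Taking $U=V$ gives $\sum_{C}p_{\x}(C)=\Delta_{G_\emptyset}=1$. For $C\in\Acal_{G^*}$, take $U=\onb_G(C)$. By \eqref{eq:CF-adm}, a nonempty clique $D$ satisfies $C\to D$ exactly when $D\subseteq\onb_G(C)$, and $D=\emptyset$ contributes $p_{\x}(\emptyset)$. The identity then reads $p_{\x}(\emptyset)+\sum_{C\to D}p_{\x}(D)=\Phi_{\x}(C)$.

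The only delicate step is checking that the map $(D,E)\mapsto D\sqcup E$ is a genuine bijection onto pairs (clique $A$ of $G^*$, subset $D\subseteq A$). This rests on the description $W_D=\bigcap_{u\in D}\nb_{G^*}(u)$ stated just before the lemma.
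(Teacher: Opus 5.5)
Your proposal is correct and follows the paper's proof essentially step for step. Both prove the restricted identity $\sum_{C\subseteq S}p_{\x}(C)=\Delta_{G_{V\setminus S}}(\x_{V\setminus S})$ by regrouping the pairs into $A=C\cup B$ and cancelling every $A$ with $A\cap S\neq\emptyset$ via the alternating sum, then specialize to $S=V$ and $S=\onb_G(C)$, with positivity read off from $\x\in M_G$ (your explicit check that $(D,E)\mapsto D\sqcup E$ is a bijection just spells out a step the paper states in one line).
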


\begin{example}
Let \(G\) be the chain \(1-2-3\). Then \(G^*\) has the single edge \(\{1,3\}\), and hence $\Ccal_{G^*}
    =
    \{\emptyset,\{1\},\{2\},\{3\},\{1,3\}\}$. 
Then,
\[
    \Delta_G(\x)=1-x_1-x_2-x_3+x_1x_3
\]
and the values of \(p_{\x}(C)\) are
\[
\begin{array}{c|ccccc}
C
& \emptyset
& \{1\}
& \{2\}
& \{3\}
& \{1,3\}
\\ \hline
p_{\x}(C)
& \Delta_G(\x)
& x_1(1-x_3)
& x_2
& x_3(1-x_1)
& x_1x_3
\end{array}
\]
\end{example}

We construct a random finite admissible clique sequence as a Markov chain
stopped upon hitting the empty clique.  First sample $C_0\in \Ccal_{G^*}$ according to 
\[
    \mathbb P(C_0=C)=p_{\x}(C),
    \qquad C\in \Ccal_{G^*}.
\]
If \(C_0=\emptyset\), set \(T=0\) and output the empty sequence.
Otherwise, given \(C_t=C\in\Acal_{G^*}\), sample \(C_{t+1}\) according to
\[
    \mathbb P(C_{t+1}=D\mid C_t=C)
    =
    \begin{cases}
        \dfrac{p_{\x}(D)}{\Phi_{\x}(C)},
        & D\in \{\emptyset\}\cup
        \{D\in\Acal_{G^*}: C\to D\}, \\[1.2em]
        0, & \text{otherwise}.
    \end{cases}
\]
Finally, define the hitting time
\(
    T:=\inf\{t\in\Natural \colon  C_t=\emptyset\}.
\)
The resulting admissible clique sequence is
\(
    (C_0,\ldots,C_{T-1}),
\)
with the convention that this sequence is empty when \(T=0\).

\begin{theorem}
\label{prop:clique-chain-nm1}
Let $G$ be a decomposable graph. 
We have 
\[
    \K:=\sum_{t=0}^{T-1}\one_{C_t}\sim\nm_G(1,\x),
\]
with the convention that \(\K= \underline{0}\) if \(T=0\). 
\end{theorem}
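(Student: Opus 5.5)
The plan is to compute the probability of every finite admissible clique sequence explicitly and observe that it telescopes. Then we transfer the result to the free quotient monoid \(L\) via the Cartier--Foata normal form (Lemma~\ref{lem:CF-normal-form}), and finally recognise the generating function of \(L\) as \(\Delta_G(\x)^{-1}\).

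\textbf{Step 1: the chain is well defined.} By Lemma~\ref{lem:propP}, \(p_{\x}\) is a strictly positive probability distribution on \(\Ccal_{G^*}\), so the initial law is legitimate. For \(C\in\Acal_{G^*}\), the identity \(\Phi_{\x}(C)=p_{\x}(\emptyset)+\sum_{D\colon C\to D}p_{\x}(D)\) shows that the transition probabilities out of \(C\) sum to one.

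\textbf{Step 2: the path probability telescopes.} Fix \(T\ge1\) and an admissible sequence \((C_0,\dots,C_{T-1})\) of nonempty cliques of \(G^*\) with \(C_t\to C_{t+1}\). Its probability of being the output is
\[
p_{\x}(C_0)\prod_{t=0}^{T-2}\frac{p_{\x}(C_{t+1})}{\Phi_{\x}(C_t)}\cdot\frac{p_{\x}(\emptyset)}{\Phi_{\x}(C_{T-1})}.
\]
Writing \(p_{\x}(C)=\Phi_{\x}(C)\,\x^{\one_C}\), each factor \(\Phi_{\x}(C_t)\) in a numerator cancels against the following denominator. Using \(p_{\x}(\emptyset)=\Delta_G(\x)\), what remains is
\[
\Delta_G(\x)\,\x^{\sum_t\one_{C_t}}.
\]
The same formula holds for the empty sequence, since \(\P(T=0)=p_{\x}(\emptyset)=\Delta_G(\x)\). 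Hence the probability of an output sequence depends only on its content \(\k=\sum_t\one_{C_t}\), and equals \(\Delta_G(\x)\x^{\k}\).

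\textbf{Step 3: counting via Cartier--Foata.} By Lemma~\ref{lem:CF-normal-form}, finite admissible sequences of nonempty cliques of \(G^*\), together with the empty sequence, are in bijection with the elements of \(L\). Under this bijection the content \(\k\) of a sequence equals \(\varepsilon(\ell)\). It remains to show
\[
\sum_{\ell\in L}\x^{\varepsilon(\ell)}=\Delta_G(\x)^{-1}
\]
as formal power series. This is the Cartier--Foata Möbius inversion formula \cite{cartier1969commutation}: the commutation graph is \(G^*\), the commuting subsets are exactly the cliques of \(G^*\), and the Möbius polynomial is \(\sum_{C\in\Ccal_{G^*}}(-1)^{|C|}\x^{\one_C}=\Delta_G(\x)\). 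This is the same identity used in \cite[Theorem~4.2]{GDir}. It follows that \(\#\{\ell\in L\colon\varepsilon(\ell)=\k\}=\gbinomm{|\k|}{\k}{G}\), the second-type coefficient with \(s=1\).

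\textbf{Step 4: termination and the law of \(\K\).} For \(\x\in M_G\) the series \(\sum_{\k}\gbinomm{|\k|}{\k}{G}\x^{\k}\) converges to \(\Delta_G(\x)^{-1}\) by \cite[Lemma~2.18]{GDir}. Summing the probabilities from Step 2 over all finite sequences therefore gives
\[
\Delta_G(\x)\cdot\Delta_G(\x)^{-1}=1,
\]
so \(T<\infty\) almost surely and no mass escapes to infinite paths. Grouping sequences by content then yields
\[
\P(\K=\k)=\gbinomm{|\k|}{\k}{G}\x^{\k}\Delta_G(\x),
\]
which is the PMF of \(\nm_G(1,\x)\).

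The main obstacle is Step 3. One must check that the admissibility relation \eqref{eq:CF-adm} is exactly the Foata normal-form condition for the partial commutation given by \(G^*\): every letter of \(C_{t+1}\) fails to commute with, or equals, some letter of \(C_t\). One must also justify the passage from the formal identity to the numerical one on \(M_G\). I would take both from \cite{DiekertMetivier1997} and \cite{GDir} rather than reprove them.
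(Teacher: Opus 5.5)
Your proposal is correct and follows essentially the same route as the paper. Both telescope the path probability to $\Delta_G(\x)\x^{\k}$, use the Cartier--Foata bijection between admissible clique sequences and elements of $L$, and then apply the identity $\sum_{\ell\in L}\x^{\varepsilon(\ell)}=\Delta_G(\x)^{-1}$ to obtain both almost-sure termination and the count $\gbinomm{|\k|}{\k}{G}$. You are slightly more explicit than the paper in checking that the transition kernel is stochastic via Lemma~\ref{lem:propP} and in flagging the step from the formal identity to the numerical one on $M_G$; the paper leaves both implicit.
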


If $G$ is decomposable, then by \cite[Corollary 3.6]{GDir}, for every $r\in\mathbb N_+$,
\[
\sum_{j=1}^r \K_j
\sim
\nm_G(r,\x),
\]
where $\K_1,\ldots,\K_r$ are i.i.d. $\nm_G(1,\x)$.

\subsection{\texorpdfstring{Probabilistic interpretation of \(\nhg_G\)}
{Probabilistic interpretation of nhgG}}
\label{sec:clique-nhg}

A first interpretation of \(\nhg_G\) follows from its conditional construction in
terms of graphical negative multinomial random vectors in Proposition \ref{prop:nhg}. However, that construction
involves an auxiliary parameter vector \(\x\), whereas the distribution
\(\nhg_G(M,\K,r)\) itself does not depend on \(\x\). We now give a parameter-free
sampling interpretation.

Let \(\Acal_{G^*}\) and the admissibility relation \(\to\) be as in the previous subsection. A finite sequence
\[
    \mathbf C=(C_1,\dots,C_T),
    \qquad C_t\in\Acal_{G^*},
\]
is called admissible if $C_t\to C_{t+1}$ for all $t=1,\dots,T-1$.
We also allow the empty admissible sequence, denoted by \(()\). For an admissible
sequence \(\mathbf C\), define its content by
\[
    \varepsilon(\mathbf C)
    :=
    \sum_{t=1}^T \one_{C_t}\in\Natural^V,
\]
with the convention that \(\varepsilon(())=\underline 0\).

For \(q\in\Natural_+\) and \(\K\in\Natural^V\), consider the set of ordered
\(q\)-tuples of admissible sequences with total content \(\K\):
\[
    \mathcal A_q(\K)
    :=
    \left\{
    (\mathbf C^{(1)},\dots,\mathbf C^{(q)}):
    \sum_{j=1}^q \varepsilon(\mathbf C^{(j)})=\K
    \right\}.
\]

\begin{theorem}
\label{thm:interp_nhg}
Let \(G\) be a decomposable graph. Fix \(M\in\Natural_+\), \(r\in\{1,\dots,M\}\), and \(\K\in\Natural^V\). Choose an
element of \(\mathcal A_{M+1}(\K)\) uniformly at random,
\[
    (\mathbf C^{(1)},\dots,\mathbf C^{(M+1)})
    \sim
    \mathrm{Unif}\bigl(\mathcal A_{M+1}(\K)\bigr),
\]
and define $\N
    =
    \sum_{j=1}^r \varepsilon(\mathbf C^{(j)})$. Then
\[
    \N\sim\nhg_G(M,\K,r).
\]
\end{theorem}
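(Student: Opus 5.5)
Everything reduces to counting admissible sequences with a given content. By Lemma~\ref{lem:CF-normal-form}, admissible sequences of nonempty cliques of \(G^*\) are in bijection with elements of the free quotient monoid \(L\), and the map preserves content: \(\varepsilon(\mathbf C)=\varepsilon(C_1\cdots C_T)\). Let \(a(\n)\) be the number of admissible sequences with content \(\n\), equivalently the number of elements of \(L\) with letter-count vector \(\n\). The Cartier--Foata identity (the generating function of \(L\) is the inverse of the Möbius polynomial) gives
\[
\sum_{\n\in\Natural^V} a(\n)\,\x^{\n}=\Delta_G(\x)^{-1}
\]
as formal power series. This is the identity behind \cite[Theorem~4.2]{GDir}. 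Hence \(a(\n)=\gbinomm{|\n|}{\n}{G}\).

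Next I count \(q\)-tuples. Since \(|\mathcal A_q(\K)|\) is the coefficient of \(\x^{\K}\) in \(\bigl(\sum_{\n}a(\n)\x^{\n}\bigr)^q=\Delta_G(\x)^{-q}\), we get
\[
|\mathcal A_q(\K)|=\gbinomm{|\K|+q-1}{\K}{G}.
\]
This requires no decomposability. For \(q=M+1\) it gives \(\gbinomm{M+|\K|}{\K}{G}\), which is exactly the denominator in Definition~\ref{def:ngh_G}.

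For the numerator, fix \(\n\) with \(\underline 0\le\n\le\K\). The uniformly chosen tuple has \(\N=\n\) if and only if its first \(r\) components form an element of \(\mathcal A_r(\n)\) and its last \(M+1-r\) components form an element of \(\mathcal A_{M+1-r}(\K-\n)\). The set of such tuples is the product of these two sets, so
\[
\P(\N=\n)=\frac{|\mathcal A_r(\n)|\,|\mathcal A_{M+1-r}(\K-\n)|}{|\mathcal A_{M+1}(\K)|}
=\frac{\gbinomm{|\n|+r-1}{\n}{G}\gbinomm{M-r+|\K|-|\n|}{\K-\n}{G}}{\gbinomm{M+|\K|}{\K}{G}}.
\]
This uses \(r\ge1\) and \(M+1-r\ge1\), which hold since \(r\in\{1,\dots,M\}\). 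The support matches, because every \(\n\le\K\) is attained: put \(\one_{\{v\}}\)-singletons in the first component and the rest in the last. Each of these is an admissible sequence, since any sequence of singletons over a word is a valid element of \(L\) via its normal form.

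The main obstacle is the first step, identifying \(a(\n)\) with the second-type coefficient at \(s=1\). I would take it from the Cartier--Foata inversion theorem: \(\sum_{\ell\in L}\ell=\bigl(\sum_{C}(-1)^{|C|}C\bigr)^{-1}\), summing over cliques \(C\) of the commutation graph \(G^*\). I would then commutativize by sending each letter \(v\) to \(x_v\), and use that the normal-form bijection (Lemma~\ref{lem:CF-normal-form}) preserves content. The argument is otherwise purely combinatorial. Decomposability enters only through the requirement of Definition~\ref{def:ngh_G} that \(\nhg_G\) is defined, which guarantees the PMF is the stated one.
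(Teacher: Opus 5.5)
Your proposal is correct and follows essentially the same route as the paper: you obtain $|\mathcal A_q(\n)|=\gbinomm{|\n|+q-1}{\n}{G}$ by raising the Cartier--Foata identity $\sum_{\ell\in L}\x^{\varepsilon(\ell)}=\Delta_G(\x)^{-1}$ to the $q$-th power, and then count the tuples with $\N=\n$ as $|\mathcal A_r(\n)|\,|\mathcal A_{M+1-r}(\K-\n)|$. One wording point in your support remark: a sequence of singletons need not itself satisfy $C_t\to C_{t+1}$, so you should take the Cartier--Foata normal form of some word with content $\n$, which is admissible with that content (as you partly indicate).
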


\section{Bayesian hierarchies for graphical hypergeometric models}
\label{sec:bayesian_inference}

We now develop a Bayesian hierarchical framework for the graphical
hypergeometric and graphical negative hypergeometric models.
We first recall the graphical Dirichlet-type priors introduced in \cite{GDir}, which will serve as mixing distributions in the hierarchical construction.  
\begin{definition}[{\cite[Section 5]{GDir}}]	\label{def-graph-dir-inverted-dir}
	Let \(G=(V,E)\) be a finite decomposable graph.
	\begin{enumerate}
		\item A probability measure on \(\mathbb{R}^V\) is the \(G\)-Dirichlet distribution with parameters \((\ualpha,\beta)\), where \(\ualpha\in (0,\infty)^V\) and \(\beta > 0\), if it has the  density
		\begin{align*}
        f(\x)=K_G(\ualpha,\beta) \Delta_G(\x)^{\beta-1}\x^{\ualpha-1} \mathbbm{1}_{M_G}(\x),\qquad \x\in \mathbb{R}^V,
		\end{align*}
		where \(K_G(\ualpha,\beta)\) is a normalizing constant. This measure is denoted by \(\Dir_G(\ualpha,\beta)\).
		\item A probability measure on \(\mathbb{R}^V\) is the \(G\)-inverted Dirichlet distribution with parameters \((\ualpha,\beta)\), where \(\ualpha\in(0,\infty)^V\) and \(\beta > \max_{C\in \Ccal_G^+}|\ualpha_C|\), if it has the  density
		\begin{align*}
				f(\y)=k_G(\ualpha,\beta)\delta_G(\y)^{-\beta} \y^{\ualpha-1} \mathbbm{1}_{(0,\infty)^V}(\y),\qquad \y\in \mathbb{R}^V,
		\end{align*}
		where \(k_G(\ualpha,\beta)\) is a normalizing constant. This measure is denoted by \(\IDir_{G}(\ualpha,\beta)\).
	\end{enumerate}
\end{definition}
The normalizing constants are given by \cite[Theorem~5.5]{GDir}:
        \begin{align*}
				K_G(\ualpha,\beta)&=\frac{\prod_{C\in \Ccal_G^+}\Gamma(|\ualpha_C|+\beta)}{\Gamma(\beta)^m\prod_{i\in V}\Gamma(\alpha_i) \prod_{S\in \Scal_G^-}\Gamma(|\ualpha_S|+\beta)^{\nu_S}}\\
				k_G(\ualpha,\beta)&=\frac{\Gamma(\beta)^m \prod_{S\in \Scal_G^-} \Gamma(\beta-|\ualpha_S|)^{\nu_S}}{\prod_{i\in V}\Gamma(\alpha_i)\prod_{C\in \Ccal_G^+}\Gamma(\beta-|\ualpha_C|)},
		\end{align*}
where $m$ is the number of connected components of $G$. 

In \cite{GDir}, the graphical Dirichlet and graphical inverted Dirichlet
distributions were introduced as conjugate priors for the $\nm_G$ and $\mult_G$
models, respectively.  Placing these priors on the parameter vectors $\x$ or $\y$ and integrating them out leads to the marginal (or compound) distributions for the observed counts $\N$.

\begin{definition}[Compound graphical distributions]
Let $G$ be a decomposable graph.
\begin{enumerate}
    \item The $G$-Dirichlet-multinomial distribution,
    $\K \sim \DirMult_G(\ualpha,\beta,r)$, arises from the model
    $\K \mid \Y=\y \sim \mult_G(r,\y)$ with prior
    $\Y \sim \IDir_G(\ualpha,\beta)$. Its PMF is
    \begin{align*}
    \P(\K=\n)
    =
    \gbinom{r}{\n}{G}
    \frac{k_G(\ualpha,\beta)}{k_G(\ualpha+\n,\beta+r)}
    =
    \gbinom{r}{\n}{G}
    \E\left[
    \Y^\n \delta_G(\Y)^{-r}
    \right],
    \quad \n \in \mathbb{N}_{G,r}.
    \end{align*}

    \item The $G$-Dirichlet-negative multinomial distribution,
    $\K \sim \DirNm_G(\ualpha,\beta,r)$, arises from the model
    $\K \mid \X=\x \sim \nm_G(r,\x)$ with prior
    $\X \sim \Dir_G(\ualpha,\beta)$. Its PMF is
    \begin{align*}
    \P(\K=\n)
    =
    \gbinomm{|\n|+r-1}{\n}{G}
    \frac{K_G(\ualpha,\beta)}{K_G(\ualpha+\n,\beta+r)}
    =
    \gbinomm{|\n|+r-1}{\n}{G}
    \E\left[
    \X^\n \Delta_G(\X)^r
    \right],
    \quad \n \in \Natural^V.
    \end{align*}
\end{enumerate}
\end{definition}

These compound laws provide natural priors for the population
count vector \(\K\) in the graphical hypergeometric models. The Bayesian
hierarchical framework developed below follows this interpretation: a latent
graphical parameter \((\X\) or \(\Y)\) first generates \(\K\), and the observed
count vector \(\N\) is then obtained by sampling from this population without
replacement.

There are two parallel schemes.  In the fixed-draws case, the hierarchy is
\(
\Y\longrightarrow \K\longrightarrow \N,
\)
where
\[
\Y\sim \IDir_G(\ualpha,\beta),
\K\mid\Y \sim \mult_G(M,\Y),
\N\mid\K \sim \hg_G(M,\K,r),
\]
with the conditional independence relation $\N\perp\!\!\!\perp \Y\mid \K$.

In the fixed-failures case, the hierarchy is
\(
\X\longrightarrow \K\longrightarrow \N,
\)
where
\[
\X\sim \Dir_G(\ualpha,\beta),
\K\mid\X \sim \nm_G(M+1,\X),
\N\mid\K \sim \nhg_G(M,\K,r),
\]
with $\N\perp\!\!\!\perp \X\mid \K$.

Thus \(\K\) is interpreted as a random finite population count vector, while
\(\Y\) or \(\X\) is the latent graphical parameter governing its law.  The
same hierarchy also yields the induced marginal laws, posterior updates, and
predictive laws for the unobserved remainder.  
This Bayesian structure is summarized in
Table~\ref{tab:bayesian_conjugacy_schemes}.

\begin{table}[ht]
\centering
\caption{Bayesian hierarchies, posterior updates, and predictive structure.}
\label{tab:bayesian_conjugacy_schemes}
\scriptsize
\setlength{\tabcolsep}{3pt}
\renewcommand{\arraystretch}{1.55}
\begin{tabular}{@{}p{0.17\textwidth}p{0.27\textwidth}p{0.47\textwidth}@{}}
\toprule
\textbf{Stopping rule}
&
\textbf{Hierarchical model}
&
\textbf{Induced laws and updates}
\\
\midrule
 \begin{tabular}{c}   Fixed number\\of draws \end{tabular}
    &
    \(\begin{aligned}
        \Y &\sim \IDir_G(\ualpha,\beta),\\
        \K\mid\Y &\sim \mult_G(M,\Y),\\
        \N\mid\K &\sim \hg_G(M,\K,r),\\
        \N &\perp\!\!\!\perp \Y\mid\K
    \end{aligned}\)
    &
    \(\begin{aligned}
        \N \mid \Y &\sim \mult_G(r,\Y),\\
       \Y\mid \K &\sim \IDir_G(\ualpha+\K,\beta+M)\\
        \Y\mid \N &\sim \IDir_G(\ualpha+\N,\beta+r)\\
        \K &\sim \DirMult_G(\ualpha,\beta,M),\\
        \N &\sim \DirMult_G(\ualpha,\beta,r),\\
        \K-\N\mid\N
        &\sim
        \DirMult_G(\ualpha+\N,\beta+r,M-r),\\
                \K-\N\mid\Y, \N
        &\sim
        \mult_G(M-r,\Y)
    \end{aligned}\)
    \\

    \midrule
\begin{tabular}{c}
    Fixed number\\of failures \end{tabular}
    &
    \(\begin{aligned}
        \X &\sim \Dir_G(\ualpha,\beta),\\
        \K\mid\X &\sim \nm_G(M+1,\X),\\
        \N\mid\K &\sim \nhg_G(M,\K,r),\\
        \N &\perp\!\!\!\perp \X\mid\K
    \end{aligned}\)
    &
    \(\begin{aligned}
    \N\mid \X & \sim \nm_G(r,\X),\\
       \X\mid \K&\sim \Dir_G(\ualpha+\K, \beta+M+1),\\
        \X\mid \N&\sim \Dir_G(\ualpha+\N, \beta+r),\\
        \K &\sim \DirNm_G(\ualpha,\beta,M+1),\\
        \N &\sim \DirNm_G(\ualpha,\beta,r),\\
     \K-\N\mid\N
        &\sim
        \DirNm_G(\ualpha+\N,\beta+r,M-r+1),\\
        \K-\N\mid \X, \N
        &\sim
        \nm_G(M-r+1, \X)
    \end{aligned}\)
    \\

    \bottomrule
\end{tabular}
\end{table}

\subsection{Induced laws and posterior updates}

We first record the elementary splitting identities which explain the entries
in Table~\ref{tab:bayesian_conjugacy_schemes}.  They are the graphical
analogues of the classical binomial--hypergeometric and
negative-binomial--negative-hypergeometric identities.

\subsubsection{Fixed number of draws.}
Assume
\[
    \Y\sim\IDir_G(\ualpha,\beta),\quad
    \K\mid\Y=\y\sim\mult_G(M,\y),\quad
    \N\mid\K=\k\sim\hg_G(M,\k,r).
\]
Write
\(
    \L=\K-\N .
\)
Then, for admissible \(\n\in\mathbb N_{G,r}\) and
\(\ul\in\mathbb N_{G,M-r}\),
\[
\begin{aligned}
    \P(\N=\n,\L=\ul\mid \Y=\y)
    &=
    \P(\K=\n+\ul\mid\Y=\y)\,
    \P(\N=\n\mid \K=\n+\ul)
    \\
    &=
    \gbinom{M}{\n+\ul}{G}
    \frac{\y^{\n+\ul}}{\delta_G(\y)^M}
    \frac{
        \gbinom{r}{\n}{G}
        \gbinom{M-r}{\ul}{G}
    }{
        \gbinom{M}{\n+\ul}{G}
    }
    \\
    &=
    \left[
        \gbinom{r}{\n}{G}
        \frac{\y^\n}{\delta_G(\y)^r}
    \right]
    \left[
        \gbinom{M-r}{\ul}{G}
        \frac{\y^{\ul}}{\delta_G(\y)^{M-r}}
    \right].
\end{aligned}
\]
Consequently,
\(
    \N\mid\Y=\y\sim\mult_G(r,\y),
    \L\mid\Y=\y,\N=\n\sim\mult_G(M-r,\y),
\)
and in fact \(\N\perp\!\!\!\perp\L\mid\Y\).

Combining this with the inverted Dirichlet prior gives
\begin{align*}
   f_{\Y\mid \N=n}(\y)
    & \propto \P(\N=\n\mid \Y=\y) f_{\Y}(\y)
     = 
    \left[
        \gbinom{r}{\n}{G}
        \frac{\y^\n}{\delta_G(\y)^r}
    \right]
    k_G(\ualpha,\beta)
    \delta_G(\y)^{-\beta}
    \y^{\ualpha-\one}
\\ & \propto
    \delta_G(\y)^{-(\beta+r)}
    \y^{\ualpha+\n-\one}.
\end{align*}
Thus
\[
    \Y\mid\N=\n
    \sim
    \IDir_G(\ualpha+\n,\beta+r).
\]
Similarly, replacing \(\N=\n\) by \(\K=\k\) yields
\[
    \Y\mid\K=\k
    \sim
    \IDir_G(\ualpha+\k,\beta+M).
\]
Integrating out \(\Y\) gives
\[
    \K\sim\DirMult_G(\ualpha,\beta,M),
    \qquad
    \N\sim\DirMult_G(\ualpha,\beta,r),
\]
and
\[
    \K-\N\mid\N=\n
    \sim
    \DirMult_G(\ualpha+\n,\beta+r,M-r).
\]

\subsubsection{Fixed number of failures.}
Now assume
\[
    \X\sim\Dir_G(\ualpha,\beta),\quad
    \K\mid\X=\x\sim\nm_G(M+1,\x),\quad
    \N\mid\K=\k\sim\nhg_G(M,\k,r),
\]
and again write \(\L=\K-\N\).  For
\(\n,\ul\in\Natural^V\),
\begin{align*}
    \P(\N=\n,\L=\ul\mid \X=\x)
    &=
    \P(\K=\n+\ul\mid\X=\x)\,
    \P(\N=\n\mid \K=\n+\ul)
    \\
    &=
    \gbinomm{M+|\n+\ul|}{\n+\ul}{G}
    \x^{\n+\ul}\Delta_G(\x)^{M+1}
    \frac{
        \gbinomm{|\n|+r-1}{\n}{G}
        \gbinomm{M-r+|\ul|}{\ul}{G}
    }{
        \gbinomm{M+|\n+\ul|}{\n+\ul}{G}
    }
    \\
    &=
    \left[
        \gbinomm{|\n|+r-1}{\n}{G}
        \x^\n\Delta_G(\x)^r
    \right]
    \left[
        \gbinomm{M-r+|\ul|}{\ul}{G}
        \x^{\ul}\Delta_G(\x)^{M-r+1}
    \right].
\end{align*}

Therefore,
\(
    \N\mid\X=\x\sim\nm_G(r,\x),
    \L\mid\X=\x,\N=\n\sim\nm_G(M-r+1,\x),
\)
and \(\N\perp\!\!\!\perp\L\mid\X\).

The posterior update for the graphical Dirichlet parameter follows from
\begin{align*}
    f_{\X\mid \N=n}(\x)
    &\propto
    \left[
        \gbinomm{|\n|+r-1}{\n}{G}
        \x^\n\Delta_G(\x)^r
    \right]
    K_G(\ualpha,\beta)
    \Delta_G(\x)^{\beta-1}
    \x^{\ualpha-\one}\propto
    \Delta_G(\x)^{\beta+r-1}
    \x^{\ualpha+\n-\one}.
\end{align*}
Hence
\[
    \X\mid\N=\n
    \sim
    \Dir_G(\ualpha+\n,\beta+r),
\]
and, similarly,
\[
    \X\mid\K=\k
    \sim
    \Dir_G(\ualpha+\k,\beta+M+1).
\]
Integrating out \(\X\) gives
\begin{align*}
    \K\sim\DirNm_G(\ualpha,\beta,M+1),
    \N\sim\DirNm_G(\ualpha,\beta,r),
\end{align*}
and
\[
    \K-\N\mid\N=\n
    \sim
    \DirNm_G(\ualpha+\n,\beta+r,M-r+1).
\]

\section{Real data example}
\label{sec:application_rydberg}

As an empirical illustration of the graphical multinomial model, we consider
experimental data from the Rydberg atom study of Kim et al.~\cite{kim2022rydberg}.
In this experiment, each graph vertex is represented by an atom that is measured
in one of two relevant states: a non-excited state and a Rydberg excited state.
Thus, each measurement is naturally represented by a binary vector
\[
\N=(N_v)_{v\in V}\in\{0,1\}^{V},
\]
where \(N_v=1\) means that the atom corresponding to vertex \(v\) is observed in
the Rydberg excited state. The corresponding datasets for selected figures are
publicly available in the associated Figshare repository~\cite{kim2022data}.

We focus on four graph configurations appearing in Fig.~2 of
\cite{kim2022rydberg}: the path \(P_4\), the star \(S_4\), the cycle \(C_4\),
and the paw graph \(\mathrm{PAN}_4\); see Fig.~\ref{fig:rydberg_graphs}. Although the cycle \(C_4\) is not decomposable, the distribution \(\mult_{C_4}(1,\y)\) is well defined and globally Markov with respect to \(C_4\) by Remark \ref{rem:mult_one_arbitrary_G}. Hence all four graphs considered in this example are covered by the graphical Bernoulli model.

\begin{figure}[ht]
\centering
\begin{tikzpicture}[
    scale=0.9,
    every node/.style={circle, draw, inner sep=1.5pt, minimum size=6pt},
    every edge/.style={line width=0.7pt}
]

\node[draw=none, rectangle, anchor=east] at (-0.6,0.35) {$G$};
\node[draw=none, rectangle, anchor=east] at (-0.6,-2.35) {$G^\ast$};


\begin{scope}[xshift=0cm, yshift=0cm]
    \node (p1) at (0,0) {};
    \node (p2) at (0.9,0) {};
    \node (p3) at (0.9,0.9) {};
    \node (p4) at (0,0.9) {};
    \draw (p1)--(p2)--(p3)--(p4);
    \node[draw=none, rectangle, inner sep=0pt] at (0.9,-0.55) {$P_4$};
\end{scope}

\begin{scope}[xshift=3.8cm, yshift=0cm]
    \node (s1) at (0,0) {};
    \node (s2) at (-0.7,-0.5) {};
    \node (s3) at (0.7,-0.5) {};
    \node (s4) at (0,0.75) {};
    \draw (s1)--(s2);
    \draw (s1)--(s3);
    \draw (s1)--(s4);
    \node[draw=none, rectangle, inner sep=0pt] at (0,-0.95) {$S_4$};
\end{scope}

\begin{scope}[xshift=6.3cm, yshift=0cm]
    \node (c1) at (0,0) {};
    \node (c2) at (0.9,0) {};
    \node (c3) at (0.9,0.9) {};
    \node (c4) at (0,0.9) {};
    \draw (c1)--(c2)--(c3)--(c4)--(c1);
    \node[draw=none, rectangle, inner sep=0pt] at (0.45,-0.55) {$C_4$};
\end{scope}

\begin{scope}[xshift=8.9cm, yshift=0cm]
    \node (a1) at (0,0) {};
    \node (a2) at (0.9,0) {};
    \node (a3) at (0.45,0.8) {};
    \node (a4) at (1.45,0.8) {};
    \draw (a1)--(a2)--(a3)--(a1);
    \draw (a3)--(a4);
    \node[draw=none, rectangle, inner sep=0pt] at (0.7,-0.55)
    {$\mathrm{PAN}_4$};
\end{scope}


\begin{scope}[xshift=0cm, yshift=-2.7cm]
    \node (pp1) at (0,0) {};
    \node (pp2) at (0.9,0) {};
    \node (pp3) at (0.9,0.9) {};
    \node (pp4) at (0,0.9) {};
    \draw (pp1)--(pp3);
    \draw (pp1)--(pp4);
    \draw (pp2)--(pp4);
    \node[draw=none, rectangle, inner sep=0pt] at (0.9,-0.55)
    {$P_4^\ast$};
\end{scope}

\begin{scope}[xshift=3.8cm, yshift=-2.7cm]
    \node (ss1) at (0,0) {};
    \node (ss2) at (-0.7,-0.5) {};
    \node (ss3) at (0.7,-0.5) {};
    \node (ss4) at (0,0.75) {};
    \draw (ss2)--(ss3)--(ss4)--(ss2);
    \node[draw=none, rectangle, inner sep=0pt] at (0,-0.95)
    {$S_4^\ast$};
\end{scope}

\begin{scope}[xshift=6.3cm, yshift=-2.7cm]
    \node (cc1) at (0,0) {};
    \node (cc2) at (0.9,0) {};
    \node (cc3) at (0.9,0.9) {};
    \node (cc4) at (0,0.9) {};
    \draw (cc1)--(cc3);
    \draw (cc2)--(cc4);
    \node[draw=none, rectangle, inner sep=0pt] at (0.45,-0.55)
    {$C_4^\ast$};
\end{scope}

\begin{scope}[xshift=8.9cm, yshift=-2.7cm]
    \node (aa1) at (0,0) {};
    \node (aa2) at (0.9,0) {};
    \node (aa3) at (0.45,0.8) {};
    \node (aa4) at (1.45,0.8) {};
    \draw (aa1)--(aa4);
    \draw (aa2)--(aa4);
    \node[draw=none, rectangle, inner sep=0pt] at (0.7,-0.55)
    {$\mathrm{PAN}_4^\ast$};
\end{scope}

\end{tikzpicture}
\caption{Top row: the four graph configurations \(G\) used in the
support-restricted Rydberg-atom analysis. Bottom row: the corresponding
complement graphs \(G^\ast\). Independent sets of \(G\) are precisely cliques
of \(G^\ast\), and thus determine the support of the model.}
\label{fig:rydberg_graphs}
\end{figure}

The relevance of independent sets comes from the Rydberg blockade mechanism.
When two atoms are sufficiently close, or are coupled through the engineered
Rydberg-wire construction, simultaneous excitation of both atoms is
energetically suppressed. In graph language, an edge
\(\{i,j\}\in E\) represents such an incompatibility: vertices \(i\) and \(j\)
should not both be observed in the Rydberg excited state in an ideal
blockade-consistent measurement. Therefore, for a given graph \(G=(V,E)\), the
natural support of the model is the family of independent sets of \(G\), or
equivalently, cliques of the complement graph \(G^\ast\). Under the model
\(\mult_G(1,\y)\), the probability mass function is
\[
\mathbb{P}(\N=\n)
=
\frac{\y^{\n}}{\delta_G(\y)},
\qquad
\n\in\mathbb{N}_{G,1}
=
\{\one_C\colon C\in\Ccal_{G^\ast}\}.
\]

The physical experiment is not perfectly described by this ideal support
constraint. State preparation and measurement errors, spontaneous decay, laser
phase noise, intensity fluctuations, finite temperature positional
fluctuations, and the projection steps used in the Rydberg-wire construction
can produce binary strings that violate the ideal blockade condition. Such
observations do not belong to the support of \(\mult_G(1,\y)\). For this
reason, the analysis is support-restricted: for each dataset, we retain only
the observations belonging to \(\mathbb{N}_{G,1}\) and fit
\(\mult_G(1,\y)\) to this admissible part of the sample. The analysis therefore
concerns blockade-consistent configurations rather than the complete
distribution of raw experimental shots.

More precisely, let
\[
\n^{(1)},\ldots,\n^{(m_{\text{adm}})}
\in\mathbb{N}_{G,1}
\]
denote the admissible observations. The log-likelihood is
\[
\ell(\y)
=
\sum_{k=1}^{m_{\text{adm}}}
\sum_{v\in V}
n_v^{(k)}\log y_v
-
m_{\text{adm}}\log\delta_G(\y),
\]
and the maximum likelihood estimator \(\widehat{\y}\) is obtained by
maximizing \(\ell(\y)\) over \(\y\in(0,\infty)^V\).
For numerical stability, the maximization was performed in the
log-parameters $\theta_v=\log y_v$ using the L-BFGS-B algorithm with
$\theta_v\in[-50,50]$. The same optimization procedure was used in each
bootstrap replication, and the imposed bounds were inactive in all fits.

The empirical frequencies of the admissible states are compared with the
fitted expected frequencies using the likelihood-ratio statistic
\[
T_{\mathrm{LR}}
=
2\sum_{\n\in\mathbb{N}_{G,1}}
O_{\n}
\log\left(\frac{O_{\n}}{E_{\n}}\right),
\]
where \(O_{\n}\) and \(E_{\n}\) denote the observed and fitted expected
frequencies of state \(\n\), respectively, and terms with \(O_{\n}=0\) are
set to zero.

Since \(\y\) is estimated from the data, the distribution of
\(T_{\mathrm{LR}}\) is approximated by a parametric bootstrap. In each
bootstrap replication, a sample of size \(m_{\text{adm}}\) is generated from
\(\mult_G(1,\widehat{\y})\), the parameter vector is re-estimated, and the
likelihood-ratio statistic is recomputed. For all analyses, we used \(B = 10000\) replications. The bootstrap \(p\)-value is
\[
p_{\mathrm{boot}}
=
\frac{
1+
\sum_{b=1}^{B}
\one_{\{
T_{\mathrm{LR}}^{(b)}
\geq
T_{\mathrm{LR}}^{\mathrm{obs}}
\}}
}{
B+1
}.
\]

The likelihood-ratio statistics and the corresponding bootstrap p-values are reported in Table \ref{tab:cut_only_results}. At the $5\%$ significance level, none of the fitted graphical multinomial models is rejected. 

\begin{table}[h!]
\centering
\caption{Support restricted goodness of fit results for the Rydberg atom
datasets. Here \(m\) denotes the number of raw experimental shots and
\(m_{\mathrm{adm}}\) the number of shots belonging to
\(\mathbb{N}_{G,1}\).}
\label{tab:cut_only_results}
\begin{tabular}{llrrrrr}
\toprule
Experiment & Graph & \(m\) & \(m_{\mathrm{adm}}\) &
\(m_{\mathrm{adm}}/m\) & \(T_{\mathrm{LR}}\) & \(p_{\mathrm{boot}}\) \\
\midrule
fig2d & \(P_4\)            & 678 & 191 & 28.17\% & 0.5081 & 0.9186 \\
fig2j & \(S_4\)            & 425 & 183 & 43.06\% & 8.9469 & 0.0863 \\
fig2e & \(C_4\)            & 672 & 261 & 38.84\% & 2.7639 & 0.2075 \\
fig2k & \(\mathrm{PAN}_4\) & 525 &  73 & 13.90\% & 1.5922 & 0.5326 \\
\bottomrule
\end{tabular}
\end{table}

The proportion of observations belonging to the theoretical support varies from
\(13.90\%\) for \(\mathrm{PAN}_4\) to \(43.06\%\) for \(S_4\). Thus, a substantial
fraction of the empirical excitation patterns violates the idealized graph-based
blockade constraint. Nevertheless, within the theoretical support, the goodness-of-fit
results indicate that the admissible excitation patterns are adequately described by
the fitted graphical multinomial distributions. Consequently, these distributions
should be interpreted as models for the conditional distribution of the excitation
pattern given that the observed configuration satisfies the corresponding graph-based
blockade constraint, rather than for the unconditional distribution of all observed
patterns.

The code used for data preprocessing, model fitting, and reproduction of
Table~\ref{tab:cut_only_results} is available in the accompanying
GitHub repository: \\\url{https://github.com/danielewskai/DiscreteParametricGraphicalModels}.

\section{Possible applications of clique and admissible-clique sampling}
\label{sec:independent_set_examples}

Many models with exclusion or incompatibility constraints can be represented
using independent sets of a graph \(G\), equivalently cliques of its complement
\(G^\ast\). Such structures arise, for example, in hard-core models from
statistical physics and in communication networks.

The four graphical distributions correspond to two sampling distinctions. For
sampling with replacement, \(\mult_G\) describes counts obtained from a fixed
number of clique draws, whereas \(\nm_G\) describes counts accumulated
until a prescribed number of failures. The distributions \(\hg_G\) and
\(\nhg_G\) are their respective without-replacement counterparts:
\(\hg_G\) corresponds to a fixed number of draws, while \(\nhg_G\) corresponds
to stopping after a fixed number of failures. In all four cases, the elementary
feasible configurations are cliques \(C\in\Ccal_{G^\ast}\), represented by
their indicator vectors \(\zeta=\one_C\). For the negative models, these
configurations form admissible sequences of cliques.

In applications, the graph-based support should be viewed as an idealization.
Configurations outside the support may occur because of measurement errors,
imperfect exclusion mechanisms, omitted interactions, or misspecification of
the graph \(G\). The models may therefore be used directly for admissible
observations, conditionally on admissibility, or as idealized null models.

\subsection{Wireless networks and CSMA scheduling}

Wireless scheduling is a natural source of clique indicators. The vertices of
\(G\) may represent wireless links, and an edge \(\{u,v\}\in E\) indicates that
the corresponding links interfere and cannot be active in the same time slot.
A collision-free schedule is therefore an independent set of \(G\),
equivalently a clique of \(G^\ast\).

If \(C^{(t)}\in\Ccal_{G^\ast}\) is the feasible schedule used in time slot
\(t\), then \(\zeta^{(t)}=\one_{C^{(t)}}\) records the links active in that slot,
and
\[
N_v=\sum_{t=1}^r \zeta_v^{(t)}
\]
counts the number of sampled slots in which link \(v\) is active. In idealized
CSMA models with fixed link fugacities, the schedule process has a stationary
distribution of the product form $\P(C)\propto 
\prod_{v\in C}y_v$ for $C\in\Ccal_{G^\ast}$
for suitable \(\y\). This is precisely the clique distribution underlying
\(\mult_G(1,\y)\). Hence, for a decomposable graph \(G\), \(r\) independent
stationary schedule samples give $\N\sim\mult_G(r,\y)$.
Schedules in consecutive slots are generally dependent, so this exact
interpretation applies to independent stationary samples; for observations
taken sufficiently far apart, it may also provide an approximate model.

The distribution \(\hg_G\) provides the corresponding finite-content,
without-replacement model. More precisely, if an ordered collection of
\(M\) feasible schedules is chosen uniformly among all such collections
having prescribed aggregate activity vector \(\K\), then the aggregate
activity vector in \(r\) of the \(M\) positions has distribution
\(\hg_G(M,\K,r)\).

This conflict-graph construction is standard in the CSMA literature:
feasible schedules are independent sets of the conflict graph, and
product-form stationary distributions of the form above arise in idealized
CSMA models
\cite{boorstyn1987throughput,jiang2010csma,
rajagopalan2009network_adiabatic}.

The failure-stopped construction of \(\nm_G\) also suggests an idealized
model for communication episodes. Here \(C_t\) represents the set of links
successfully active at stage \(t\). In the exact construction described in
Section~\ref{sec:nmg_int}, successive nonempty feasible schedules satisfy the
Cartier--Foata admissibility relation \(C_t\to C_{t+1}\), with the specified
transition probabilities, and the episode terminates upon reaching the empty
clique. Under this particular mechanism, the accumulated activity vector
\[
\N=\sum_{t=0}^{T-1}\one_{C_t}
\]
has distribution \(\nm_G(1,\x)\). Admissibility alone does not determine this
law, so for general CSMA dynamics \(\nm_G\) should be regarded as an
idealized model rather than an exact consequence of the interference
constraints. The distribution \(\nhg_G\) provides the corresponding
finite-content, without-replacement model under the sampling construction
described in Section~\ref{sec:clique-nhg}.

\subsection{Cancer genomics and mutually exclusive alterations}

In cancer genomics, the vertices of \(G\) may represent genes, pathways, or
genomic alteration events. An edge \(\{u,v\}\in E\) indicates that alterations
\(u\) and \(v\) are regarded as mutually exclusive, in the sense that they are
not expected to occur together in the same tumour sample. Under a
hard-constraint idealization, each tumour profile is therefore an independent
set of \(G\), or equivalently a clique of \(G^\ast\).

For tumour sample \(t\), let
\[
C^{(t)}
=
\{v\in V\colon \text{alteration }v\text{ is present in sample }t\},
\qquad
\zeta^{(t)}=\one_{C^{(t)}}.
\]
Then $N_v=\sum_t \zeta_v^{(t)}$
is the number of samples in which alteration \(v\) occurs. If tumour profiles
are treated as approximately independent draws from a common distribution on
the admissible support, then \(\mult_G(r,\y)\) gives the corresponding
with-replacement model. The distribution \(\hg_G\) gives the
without-replacement counterpart, for example when a subset is sampled from a
finite cohort while aggregate alteration counts in the full cohort are fixed.
This is analogous to exact procedures that condition on the aggregate
alteration counts in the full cohort.

This application is related to the extensive literature on mutually exclusive
cancer modules, including Dendrix and Multi-Dendrix
\cite{vandin2012dendrix,leiserson2013multidendrix}, MEMo
\cite{ciriello2012memo}, CoMEt \cite{leiserson2015comet}, network-based
approaches \cite{babur2015mutual}, and weighted or permutation-based
significance tests such as WExT and WeSME
\cite{leiserson2016wext,kim2016wesme}. Large public cancer-genomics resources,
such as cBioPortal, provide data in which such models may be explored
\cite{cerami2012cbioportal,gao2013cbioportal}.

The support restriction is particularly delicate in this application because
mutual exclusivity is usually a statistical tendency rather than a deterministic
rule. A profile outside the independent-set support may reflect tumour
heterogeneity, subtype structure, pathway cross-talk, measurement error, or an
incomplete incompatibility graph. Consequently, the fitted graphical model is
most naturally interpreted as an idealized null model or as the conditional law
of a tumour profile given that it satisfies the proposed exclusion constraints.

\subsection{Spatial hard-core configurations, adsorption, and packing}

A broad class of examples comes from spatial statistics and statistical physics.
The vertices of \(G\) represent candidate locations or objects, such as particles,
trees, cells, sensors, transmitters, or adsorbed molecules.  An edge
\(\{u,v\}\in E\) means that the two candidates are too close, overlap, or
otherwise cannot be simultaneously present.  A feasible configuration is then an
independent set of \(G\), or a clique of \(G^\ast\).

After discretizing a spatial observation window or constructing a finite set of
candidate objects, one observation gives $\zeta=\one_C$, $C\in\Ccal_{G^\ast}$.
Repeated spatial samples give counts
$\N=\sum_{j=1}^r \one_{C^{(j)}}$,
which may be modelled by \(\mult_G(r,\y)\) under an i.i.d. sampling approximation.
The model \(\hg_G\) is again the finite-population or conditioned analogue.

This interpretation is connected to Matérn hard-core processes, Strauss-type
repulsive point processes, and other spatial hard-core models
\cite{matern_spatial_variation,stoyan_spatial_point_patterns}.  It is
also related to the hard-core lattice gas and hard-square or hard-hexagon models,
where neighbouring occupied sites are forbidden and the partition function is an
independence polynomial \cite{baxter_hard_hexagons,Scott_2005}.  In the notation
of this paper, the single-configuration law is described as
\(\mult_G(1,\y)\), because it is a distribution on indicators of cliques of
\(G^\ast\).

Random sequential adsorption and packing models provide a natural bridge to the
failure-stopped mechanism.  In such models, objects are proposed sequentially and
accepted only if they are compatible with the current configuration.  A stage
\(t\) may be represented by a clique \(C_t\) of mutually compatible accepted
objects, and the process stops when a proposed object is rejected, when no
further object can be added, or when a jamming criterion is reached.  This
suggests \(\nm_G\)-type models for accumulated counts until failure, with
\(\nhg_G\) as the corresponding conditioned finite-population version.  Real
adsorption processes can be strongly path dependent, so the graphical negative
models should be viewed as idealized stopped-sampling models rather than as
literal descriptions of all adsorption dynamics \cite{cadilhe_rsa}.

\subsection{Loss networks, resource sharing, and abstract polymer models}

Further examples arise in stochastic networks and statistical mechanics. In a
loss network or resource-sharing system with pairwise incompatibility
constraints, the vertices of \(G\) may represent calls, routes, jobs, or
resource requests. An edge \(\{u,v\}\in E\) indicates that the two requests
cannot be simultaneously active because they compete for limited capacity.
Thus the active set at a fixed time is an independent set of \(G\). Repeated
stationary snapshots give clique indicators
\(\zeta^{(t)}=\one_{C^{(t)}}\) and occupancy counts
\[
\N=\sum_{t=1}^r \zeta^{(t)}.
\]
When these snapshots are treated as independent draws, \(\mult_G\) gives the
corresponding with-replacement model, whereas \(\hg_G\) gives its
without-replacement counterpart for sampling from a finite collection of
snapshots with fixed aggregate occupancy counts. Product-form stationary
distributions in loss networks and related stochastic network models provide a
classical motivation for such graph-constrained snapshot laws
\cite{kelly_loss_networks}.

In abstract polymer models, the vertices of \(G\) represent polymers and an edge
indicates incompatibility. An admissible polymer configuration is a set of
mutually compatible polymers, hence a clique of \(G^\ast\). The polynomial
\(\delta_G\) is then the finite-volume partition function in the notation of
this paper. A single random polymer configuration is described by
\(\mult_G(1,\y)\), while repeated with-replacement sampling gives
\(\mult_G(r,\y)\). The corresponding without-replacement model is
\(\hg_G\). This connects the graphical models to the large literature on
polymer systems, cluster expansions, and algorithms for hard-core partition
functions
\cite{kotecky_preiss,fernandez_procacci,weitz_counting_independent_sets}.

A stopped admissible-clique mechanism may be considered when a network or
polymer system is observed over an episode rather than at a single snapshot.
The clique \(C_t\) records the compatible set of activities, jobs, or polymers
present in layer \(t\), and the episode stops at an idle, blocking, rejection,
or failure state. The resulting accumulated count vector has the qualitative
form described by the with-replacement model \(\nm_G\), while \(\nhg_G\)
provides the corresponding without-replacement model with prescribed aggregate
content.

\subsection{Summary of the modelling role}

The examples above identify settings in which observations are naturally
represented by cliques of \(G^\ast\), or by admissible sequences of such
cliques. The distributions \(\mult_G\) and \(\nm_G\) describe the corresponding
with-replacement sampling schemes, with stopping after a fixed number of draws
or failures, respectively. The distributions \(\hg_G\) and \(\nhg_G\) are
their without-replacement counterparts. In each application, the principal
modelling choice is the specification of the incompatibility graph \(G\), whose
adequacy should be assessed against the observed support.

\appendix

\section{Proofs}
\label{app:proofs}

\subsection{Proofs from Section \ref{sec:hypergeometric_models} }

\begin{proof}[Proof of Proposition \ref{prop:hg}]
For any $\n\in\mathbb{N}^V$ such that $0\le n_v \le K_v, v\in V$,
\begin{align*}
\P(\N_1=\n \mid \N_1+\N_2=\K)
  &= \frac{\P(\N_1=\n)\,\P(\N_2=\K-\n)}{\P(\N_1+\N_2=\K)} \\
  &= \frac{
       \gbinom{r}{\n}{G}(\delta_G(\y))^{-r}\prod_{v\in V}y_v^{n_v}\,
       \gbinom{M-r}{\K-\n}{G}(\delta_G(\y))^{-(M-r)}\prod_{v\in V}y_v^{K_v-n_v}
     }{
       \gbinom{M}{\K}{G}(\delta_G(\y))^{-M}\prod_{v\in V}y_v^{K_v}
     } \\
  &= \frac{\gbinom{r}{\n}{G}\,\gbinom{M-r}{\K-\n}{G}}
          {\gbinom{M}{\K}{G}}.
\end{align*}
This is the PMF of the $G$-hypergeometric distribution with parameters $(M,\K,r)$.
\end{proof}

\begin{proof}[Proof of Theorem~\ref{thm:properties_hg}]
We prove the two assertions separately.

\noindent
(1) Global Markov property.

Applying Lemma~\ref{lem:coeff_factorization} to the three
graph-multinomial coefficients in Definition~\ref{def:hg_G} gives
\[
\P(\N=\n)
=
\frac{
\displaystyle\prod_{C\in\Ccal_G^+}
\hg(M,\K_C,r)(\n_C)
}{
\displaystyle\prod_{S\in\Scal_G^-}
\hg(M,\K_S,r)(\n_S)^{\nu_S}
}.
\]
The displayed clique-separator factorization implies the global Markov
property by the observation following \eqref{eq:clique-separator-factorization}.

\noindent
(2) Clique marginals.

For every \(S\subseteq C\), the \(S\)-marginal of
\(\hg(M,\K_C,r)\) is \(\hg(M,\K_S,r)\). Consequently, the clique
and separator distributions appearing in the factorization form a
consistent family. By the standard clique-separator construction for
decomposable graphs, the factorization above defines a probability
distribution whose maximal-clique marginals satisfy $\N_C\sim\hg(M,\K_C,r)$ for all $C\in\Ccal_G^+$.

For an arbitrary clique \(C\), choose a maximal clique \(C^+\supseteq C\).
Since \(\N_C\) is a marginal of \(\N_{C^+}\), and marginals of a multivariate
hypergeometric distribution are again hypergeometric, $\N_C\sim\hg(M,\K_C,r)$.
\end{proof}

\begin{proof}[Proof of Theorem~\ref{thm:dag-hg}] Applying the DAG factorization of the graph-multinomial coefficients of the first type from \cite[Lemma~2.15(2)]{GDir} to the three coefficients in the PMF of \(\hg_G(M,\K,r)\), we obtain \[ \P(\N=\n) = \prod_{v\in V} \frac{ \binom{r-|\n_{\pa(v)}|}{n_v} \binom{ M-r-|\K_{\pa(v)}|+|\n_{\pa(v)}| }{ K_v-n_v } }{ \binom{M-|\K_{\pa(v)}|}{K_v} }. \] For each \(v\in V\), the corresponding factor is the PMF of the classical hypergeometric distribution \[ \hg\left( M-|\K_{\pa(v)}|, K_v, r-|\n_{\pa(v)}| \right) \] evaluated at \(n_v\). Therefore, \[ \P(\N=\n) = \prod_{v\in V} \hg\left( M-|\K_{\pa(v)}|, K_v, r-|\n_{\pa(v)}| \right)(n_v). \] This is the DAG factorization with respect to \(\Gcal\), and hence \[ \P\!\left( N_v=n_v \mid \N_{\pa(v)}=\n_{\pa(v)} \right) = \hg\left( M-|\K_{\pa(v)}|, K_v, r-|\n_{\pa(v)}| \right)(n_v), \qquad v\in V. \] \end{proof}

\begin{proof}[Proof of Proposition \ref{prop:nhg}]
Let $\N_1 \sim \nm_G(r,\x)$ and $\N_2 \sim \nm_G(M-r+1,\x)$ be independent.  
For any $\n \leq \K$ we have
\begin{align*}
\P(\N_1=\n &\mid \N_1+\N_2=\K)
  = \frac{\P(\N_1=\n)\,\P(\N_2=\K-\n)}{\P(\N_1+\N_2=\K)} \\
  &= \frac{
       \gbinomm{r+|\n|-1}{\n}{G}\,\delta_G^r(-\x)\prod_{v\in V}x_v^{n_v}\,
       \gbinomm{M-r+|\K|-|\n|}{\K-\n}{G}\,\delta_G^{M-r+1}(-\x)\prod_{v\in V}x_v^{K_v-n_v}
     }{
       \gbinomm{M+|\K|}{\K}{G}\,\delta_G^{M+1}(-\x)\prod_{v\in V}x_v^{K_v}
     } \\
  &= \frac{
       \gbinomm{r+|\n|-1}{\n}{G}\,
       \gbinomm{M-r+|\K|-|\n|}{\K-\n}{G}
     }{
       \gbinomm{M+|\K|}{\K}{G}
     }.
\end{align*}
This is exactly the PMF of the $G$-negative hypergeometric distribution with parameters $(M,\K,r)$. 
\end{proof}

\begin{proof}[Proof of Theorem~\ref{thm:properties_nhg}]
We prove the two assertions separately.

\noindent
(1) Global Markov property.

Applying Lemma~\ref{lem:coeff_factorization} to the three graphical
coefficients in the definition of \(\nhg_G(M,\K,r)\) gives
\[
\P(\N=\n)
=
\frac{
\displaystyle\prod_{C\in\Ccal_G^+}
\nhg(M,\K_C,r)(\n_C)
}{
\displaystyle\prod_{S\in\Scal_G^-}
\nhg(M,\K_S,r)(\n_S)^{\nu_S}
}.
\]
The displayed clique--separator factorization implies the global Markov
property by the observation following \eqref{eq:clique-separator-factorization}.

\noindent
(2) Clique marginals.

For every \(S\subseteq C\), the \(S\)-marginal of
\(\nhg(M,\K_C,r)\) is \(\nhg(M,\K_S,r)\). By the standard clique-separator construction for
decomposable graphs, the factorization above defines a probability
distribution whose maximal-clique marginals satisfy $\N_C\sim\nhg(M,\K_C,r)$ for all $C\in\Ccal_G^+$.

For an arbitrary clique \(C\), choose a maximal clique \(C^+\supseteq C\).
Since \(\N_C\) is a marginal of \(\N_{C^+}\), and marginals of a multivariate
negative hypergeometric distribution are again negative hypergeometric, $\N_C\sim\nhg(M,\K_C,r)$.
\end{proof}

\begin{proof}[Proof of Theorem~\ref{thm:dag-nhg}]
Applying the DAG factorization of the graph-multinomial coefficients of the
second type from \cite[Lemma~2.15(2)]{GDir} to the three coefficients in the
PMF of \(\nhg_G(M,\K,r)\), we obtain
\[
\P(\N=\n)
=
\prod_{v\in V}
\frac{
\binom{n_v+|\n_{\pa(v)}|+r-1}{n_v}
\binom{
M-r+|\K_{\pa(v)}|-|\n_{\pa(v)}|+K_v-n_v
}{
K_v-n_v
}
}{
\binom{M+|\K_{\pa(v)}|+K_v}{K_v}
}.
\]
For each \(v\in V\), the corresponding factor is the PMF of the classical
negative hypergeometric distribution
\[
\nhg\left(
M+|\K_{\pa(v)}|,
K_v,
r+|\n_{\pa(v)}|
\right)
\]
evaluated at \(n_v\). Therefore,
\[
\P(\N=\n)
=
\prod_{v\in V}
\nhg\left(
M+|\K_{\pa(v)}|,
K_v,
r+|\n_{\pa(v)}|
\right)(n_v).
\]
This is the DAG factorization with respect to \(\Gcal\), and hence
\[
\P\!\left(
N_v=n_v
\mid
\N_{\pa(v)}=\n_{\pa(v)}
\right)
=
\nhg\left(
M+|\K_{\pa(v)}|,
K_v,
r+|\n_{\pa(v)}|
\right)(n_v),
\qquad v\in V.
\]
\end{proof}

\subsection{Proofs from Section \ref{sec:interpretation}}

\begin{proof}[Proof of Theorem~\ref{thm:interp_hg}]
We first record the following combinatorial interpretation of the graph-multinomial coefficients. Since
\[
\delta_G(\y)
=
\sum_{C\in\Ccal_{G^*}}\y^{\one_C},
\]
we have
\[
\sum_{\n \in \Natural_{G,r}} \gbinom{r}{\n}{G} \y^{\n} = \delta_G(\y)^r
=
\sum_{(C^{(1)},\dots,C^{(r)})\in\Ccal_{G^*}^r}
\y^{\one_{C^{(1)}}+\cdots+\one_{C^{(r)}}}.
\]
Therefore, by comparing the coefficients of $\y^\n$, \(\gbinom{r}{\n}{G}\) is the number of ordered \(r\)-tuples $(C^{(1)},\dots,C^{(r)})\in\Ccal_{G^*}^r$
such that
\[
\sum_{j=1}^r \one_{C^{(j)}}=\n.
\]

Now fix \(\n\in\Natural^V\). The event \(\{\N=\n\}\) occurs exactly when the first \(r\) cliques have total content \(\n\), while the remaining \(M-r\) cliques have total content \(\K-\n\). Thus,
\[
\left| \{\N=\n\}\right| = \gbinom{r}{\n}{G} \gbinom{M-r}{\K-\n}{G}.
\]
Since the total number of ordered \(M\)-tuples in \(\mathcal B_M(\K)\) is $\gbinom{M}{\K}{G}$,
uniform sampling from \(\mathcal B_M(\K)\) gives
\[
\mathbb P(\N=\n)
=
\frac{
\gbinom{r}{\n}{G}
\gbinom{M-r}{\K-\n}{G}
}{
\gbinom{M}{\K}{G}
}.
\]
This is exactly the probability mass function of \(\hg_G(M,\K,r)\).
\end{proof}

\begin{proof}[Proof of Lemma~\ref{lem:propP}]
We prove the following identity: for every \(S\subseteq V\),
\begin{equation}
\label{eq:restricted-p-identity}
    \sum_{\substack{C\in\Ccal_{G^*}\\ C\subseteq S}}
    p_{\x}(C)
    =
    \Delta_{G_{V\setminus S}}(\x_{V\setminus S}).
\end{equation}
Indeed, expanding the left-hand side gives
\[
\begin{aligned}
\sum_{\substack{C\in\Ccal_{G^*}\\ C\subseteq S}}
p_{\x}(C)
&=
\sum_{\substack{C\in\Ccal_{G^*}\\ C\subseteq S}}
\x^{\one_C}
\sum_{B\in\Ccal_{(G_{V\setminus \onb_G(C)})^*}}
(-1)^{|B|}\x^{\one_B}.
\end{aligned}
\]
For each pair \((C,B)\) appearing in the sum, the union \(A=C\cup B\) is a
clique of \(G^*\). Conversely, for a fixed \(A\in\Ccal_{G^*}\), the possible
choices of \(C\) are precisely the subsets \(C\subseteq A\cap S\), with
\(B=A\setminus C\). Therefore the coefficient of \(\x^{\one_A}\) in the
last display is
\[
    \sum_{C\subseteq A\cap S}(-1)^{|A|-|C|}.
\]
This coefficient is equal to \((-1)^{|A|}\) if \(A\cap S=\emptyset\), and is
zero otherwise. Hence only cliques \(A\subseteq V\setminus S\) remain, and so
\[
    \sum_{\substack{C\in\Ccal_{G^*}\\ C\subseteq S}}
    p_{\x}(C)
    =
    \sum_{A\in\Ccal_{(G_{V\setminus S})^*}}
    (-1)^{|A|}\x^{\one_A}
    =
    \Delta_{G_{V\setminus S}}(\x_{V\setminus S}).
\]
This proves \eqref{eq:restricted-p-identity}.

Taking \(S=V\) gives
\[
    \sum_{C\in\Ccal_{G^*}}p_{\x}(C)=\Delta_{G_\emptyset}=1.
\]
Moreover, since \(\x\in M_G\),
\[
    \Phi_{\x}(C)
    =
    \Delta_{G_{V\setminus \onb_G(C)}}(\x_{V\setminus \onb_G(C)})>0
\]
for every \(C\in\Ccal_{G^*}\). Since \(\x^{\one_C}>0\), we get
\(p_{\x}(C)>0\).

Finally, let \(C\in\Acal_{G^*}\). Applying
\eqref{eq:restricted-p-identity} with \(S=\onb_G(C)\), we obtain
\[
    \Phi_{\x}(C)
    =
    \sum_{\substack{D\in\Ccal_{G^*}\\ D\subseteq \onb_G(C)}}
    p_{\x}(D).
\]
Since \(D\subseteq\onb_G(C)\) is equivalent to \(D=\emptyset\) or \(C\to D\),
this gives
\[
    \Phi_{\x}(C)
    =
    p_{\x}(\emptyset)
    +
    \sum_{D\colon C\to D}p_{\x}(D).
\]
\end{proof}

\begin{proof}[Proof of Theorem~\ref{prop:clique-chain-nm1}]
Fix a nonempty admissible sequence
\[
    (C_0,\dots,C_{m-1}),
    \qquad
    C_t\in\Acal_{G^*},\quad C_t\to C_{t+1},
    \quad t=0,\dots,m-2.
\]
The probability that the procedure outputs exactly this sequence is
\[
\mathbb P(C_0,\dots,C_{m-1},C_m=\emptyset)  =
p_{\x}(C_0)
\prod_{t=0}^{m-2}
\frac{p_{\x}(C_{t+1})}{\Phi_{\x}(C_t)}
\cdot
\frac{p_{\x}(\emptyset)}{\Phi_{\x}(C_{m-1})}.
\]
Using the definition of $p_{\x}$ and telescoping, 
the right-hand side can be written as 
\[
\x^{\one_{C_0}}\Phi_{\x}(C_0)
\prod_{t=0}^{m-2}
\frac{
    \x^{\one_{C_{t+1}}}\Phi_{\x}(C_{t+1})
}{
    \Phi_{\x}(C_t)
}
\cdot
\frac{\Delta_G(\x)}{\Phi_{\x}(C_{m-1})} 
=
\Delta_G(\x)\,
\x^{\one_{C_0}+\cdots+\one_{C_{m-1}}}.
\]
For the empty sequence, the probability is
\[
    \mathbb P(C_0=\emptyset)
    =
    p_{\x}(\emptyset)
    =
    \Delta_G(\x),
\]
which is the same formula with the empty sum in the exponent.

By Lemma \ref{lem:CF-normal-form}, finite admissible sequences of nonempty cliques of \(G^*\) are in one-to-one correspondence with elements of the free quotient monoid \(L\). Moreover, if the admissible sequence
\((C_0,\dots,C_{m-1})\) corresponds to \(\ell\in L\), then
\[
    \varepsilon(\ell)
    =
    \one_{C_0}+\cdots+\one_{C_{m-1}}.
\]
Hence the procedure assigns to each \(\ell\in L\) probability $   \Delta_G(\x)\x^{\varepsilon(\ell)}$. 
In particular, the procedure terminates almost surely, since
\[
    \sum_{\ell\in L}\Delta_G(\x)\x^{\varepsilon(\ell)}
    =
    \Delta_G(\x)
    \sum_{\ell\in L}\x^{\varepsilon(\ell)}
    =
    \Delta_G(\x)\Delta_G(\x)^{-1}
    =
    1,
\]
where the Cartier--Foata identity was used.

Therefore, for \(\n\in\Natural^V\),
\begin{align*}
    \mathbb P(\N=\n)=
    \sum_{\ell\in L\colon \varepsilon(\ell)=\n}
    \Delta_G(\x)\x^\n  =
    \left|\{\ell\in L\colon \varepsilon(\ell)=\n\}\right|
    \Delta_G(\x)\x^\n  =
    \gbinomm{|\n|}{\n}{G}
    \x^\n
    \Delta_G(\x)
\end{align*}
where in the last equality we used \cite[(40)]{GDir}. This is exactly the probability mass function of \(\nm_G(1,\x)\).
\end{proof}

\begin{proof}[Proof of Theorem~\ref{thm:interp_nhg}]
We first recall the counting interpretation of the coefficients appearing in
\(\Delta_G(\x)^{-q}\). By the Cartier--Foata normal form, admissible sequences
of nonempty cliques of \(G^*\), together with the empty sequence, are in
one-to-one correspondence with elements of the free quotient monoid \(L\) induced by $G$. Hence, for
\(q\in\Natural_+\),
\[
    \Delta_G(\x)^{-q}
    =
    \left(
    \sum_{\mathbf C}
    \x^{\varepsilon(\mathbf C)}
    \right)^q
    =
    \sum_{\n\in\Natural^V}
    |\mathcal A_q(\n)|\,\x^\n,
\]
where the sum  in the middle expression is taken over all admissible sequences.
On the other hand, by definition of the graph-negative multinomial coefficients,
\[
    \Delta_G(\x)^{-q}
    =
    \sum_{\n\in\Natural^V}
    \gbinomm{|\n|+q-1}{\n}{G}\x^\n.
\]
Comparing coefficients gives
\[
    |\mathcal A_q(\n)|
    =
    \gbinomm{|\n|+q-1}{\n}{G},
    \qquad q\in\Natural_+,\ \n\in\Natural^V.
\]

Now fix \(\n\in\Natural^V\). The event \(\{\N=\n\}\) occurs precisely when the
first \(r\) admissible sequences have total content \(\n\), while the remaining
\(M+1-r\) admissible sequences have total content \(\K-\n\). Therefore the
number of elements of \(\mathcal A_{M+1}(\K)\) giving rise to \(\N=\n\) is
\[
    |\mathcal A_r(\n)|\,
    \cdot |\mathcal A_{M+1-r}(\K-\n)|=
    \gbinomm{|\n|+r-1}{\n}{G}\cdot
    \gbinomm{|\K-\n|+M-r}{\K-\n}{G}.
\]

Since the total number of admissible \((M+1)\)-tuples with content \(\K\) is
\[
    |\mathcal A_{M+1}(\K)|
    =
    \gbinomm{|\K|+M}{\K}{G},
\]
uniform sampling from \(\mathcal A_{M+1}(\K)\) gives
\[
    \mathbb P(\N=\n)
    =
    \frac{
        \gbinomm{|\n|+r-1}{\n}{G}
        \gbinomm{|\K-\n|+M-r}{\K-\n}{G}
    }{
        \gbinomm{|\K|+M}{\K}{G}
    }.
\]
This is exactly the probability mass function of \(\nhg_G(M,\K,r)\). 
\end{proof}

\section*{Declaration of generative AI and AI-assisted technologies in the writing process}
During the preparation of this work the authors used ChatGPT (versions 5.5 and 5.6) in order to assist with literature searches, verify mathematical derivations, and write the code used for the data analysis presented in Section \ref{sec:application_rydberg}. After using this tool, the authors reviewed and edited the content as needed and take full responsibility for the content of the publication.

\bibliographystyle{plain}
\bibliography{Bibl}

@incollection {DiekertMetivier1997,
    AUTHOR     = {Diekert, V. and M\'{e}tivier, Y.},
    TITLE      = {Partial commutation and traces},
    BOOKTITLE  = {Handbook of formal languages, {V}ol. 3},
    PAGES      = {457--533},
    PUBLISHER  = {Springer, Berlin},
    YEAR       = {1997},
}

@article {Mult24,
    AUTHOR   = {Kus, D. and Singh, K. and Venkatesh, R.},
    TITLE    = {Identities of the multi-variate independence polynomials from heaps theory},
    JOURNAL  = {Proc. Indian Acad. Sci. Math. Sci.},
    FJOURNAL = {Proceedings of the Indian Academy of Sciences. Mathematical Sciences},
    VOLUME   = {134},
    YEAR     = {2024},
    NUMBER   = {1},
    PAGES    = {Paper No. 16, 11},
}

@article {Scott_2005,
    AUTHOR     = {Scott, A. D. and Sokal, A. D.},
    TITLE      = {The repulsive lattice gas, the independent-set polynomial, and the {L}ov\'{a}sz local lemma},
    JOURNAL    = {J. Stat. Phys.},
    FJOURNAL   = {Journal of Statistical Physics},
    VOLUME     = {118},
    YEAR       = {2005},
    NUMBER     = {5-6},
    PAGES      = {1151--1261},
}

@misc {GDir,
    AUTHOR = {Danielewska, I. and Ko{\l}odziejek, B. and Weso{\l}owski, J. and Zeng, X.},
    TITLE  = {Graphical negative multinomial and multinomial models with {Dirichlet}-type priors},
    YEAR   = {2025},
    NOTE   = {Preprint, arXiv:2301.06058v5 [math.PR]},
}

@book {cartier1969commutation,
    AUTHOR     = {Cartier, P. and Foata, D.},
    TITLE      = {Probl\`emes combinatoires de commutation et r\'{e}arrangements},
    SERIES     = {Lecture Notes in Mathematics, No. 85},
    PUBLISHER  = {Springer-Verlag, Berlin-New York},
    YEAR       = {1969},
    PAGES      = {iv+88},
}

@inproceedings {levit2005independence,
    AUTHOR    = {Levit, V. E. and Mandrescu, E.},
    TITLE     = {The independence polynomial of a graph---a survey},
    BOOKTITLE = {Proceedings of the 1st International Conference on Algebraic Informatics},
    PAGES     = {233--254},
    PUBLISHER = {Aristotle University of Thessaloniki},
    ADDRESS   = {Thessaloniki},
    YEAR      = {2005},
}

@techreport {goldwurm1998clique,
    AUTHOR      = {Goldwurm, M. and Saporiti, L.},
    TITLE       = {Clique polynomials and trace monoids},
    INSTITUTION = {Dipartimento di Scienze dell'Informazione, Universit{\`a} degli Studi di Milano},
    TYPE        = {Rapporto Interno},
    NUMBER      = {222-98},
    ADDRESS     = {Milano, Italy},
    YEAR        = {1998},
}

@article {Dawid1993,
    AUTHOR   = {Dawid, A. P. and Lauritzen, S. L.},
    TITLE    = {Hyper-{Markov} laws in the statistical analysis of decomposable graphical models},
    JOURNAL  = {Ann. Statist.},
    FJOURNAL = {The Annals of Statistics},
    VOLUME   = {21},
    YEAR     = {1993},
    NUMBER   = {3},
    PAGES    = {1272--1317},
}

@article {kim2022rydberg,
    AUTHOR   = {Kim, M. and Kim, K. and Hwang, J. and Moon, E.-G. and Ahn, J.},
    TITLE    = {Rydberg quantum wires for maximum independent set problems},
    JOURNAL  = {Nat. Phys.},
    FJOURNAL = {Nature Physics},
    VOLUME   = {18},
    YEAR     = {2022},
    NUMBER   = {7},
    PAGES    = {755--759},
}

@misc {kim2022data,
    AUTHOR       = {Kim, M. and Kim, K. and Hwang, J. and Moon, E.-G. and Ahn, J.},
    TITLE        = {Rydberg quantum wires for maximum independent set problems},
    HOWPUBLISHED = {Figshare},
    YEAR         = {2022},
    NOTE         = {Dataset and code, version 3},
}

@article {cerami2012cbioportal,
    AUTHOR   = {Cerami, E. and Gao, J. and Dogrusoz, U. and Gross, B. E. and Sumer, S. O. and Aksoy, B. A. and Jacobsen, A. and Byrne, C. J. and Heuer, M. L. and Larsson, E. and Antipin, Y. and Reva, B. and Goldberg, A. P. and Sander, C. and Schultz, N.},
    TITLE    = {The {cBio} cancer genomics portal: An open platform for exploring multidimensional cancer genomics data},
    JOURNAL  = {Cancer Discov.},
    FJOURNAL = {Cancer Discovery},
    VOLUME   = {2},
    YEAR     = {2012},
    NUMBER   = {5},
    PAGES    = {401--404},
}

@article {gao2013cbioportal,
    AUTHOR   = {Gao, J. and Aksoy, B. A. and Dogrusoz, U. and Dresdner, G. and Gross, B. and Sumer, S. O. and Sun, Y. and Jacobsen, A. and Sinha, R. and Larsson, E. and Cerami, E. and Sander, C. and Schultz, N.},
    TITLE    = {Integrative analysis of complex cancer genomics and clinical profiles using the {cBioPortal}},
    JOURNAL  = {Sci. Signal.},
    FJOURNAL = {Science Signaling},
    VOLUME   = {6},
    YEAR     = {2013},
    NUMBER   = {269},
    PAGES    = {pl1},
}

@article {ciriello2012memo,
    AUTHOR   = {Ciriello, G. and Cerami, E. and Sander, C. and Schultz, N.},
    TITLE    = {Mutual exclusivity analysis identifies oncogenic network modules},
    JOURNAL  = {Genome Res.},
    FJOURNAL = {Genome Research},
    VOLUME   = {22},
    YEAR     = {2012},
    NUMBER   = {2},
    PAGES    = {398--406},
}

@article {babur2015mutual,
    AUTHOR   = {Babur, {\"O}. and G{\"o}nen, M. and Aksoy, B. A. and Schultz, N. and Ciriello, G. and Sander, C. and Demir, E.},
    TITLE    = {Systematic identification of cancer driving signaling pathways based on mutual exclusivity of genomic alterations},
    JOURNAL  = {Genome Biol.},
    FJOURNAL = {Genome Biology},
    VOLUME   = {16},
    YEAR     = {2015},
    PAGES    = {45},
}

@incollection {leiserson2015comet,
    AUTHOR    = {Leiserson, M. D. M. and Wu, H. and Vandin, F. and Raphael, B. J.},
    TITLE     = {Co{ME}t: a statistical approach to identify combinations of mutually exclusive alterations in cancer},
    BOOKTITLE = {Research in computational molecular biology},
    SERIES    = {Lecture Notes in Comput. Sci.},
    VOLUME    = {9029},
    PAGES     = {202--204},
    PUBLISHER = {Springer, Cham},
    YEAR      = {2015},
}

@book {Lauritzen,
    AUTHOR    = {Lauritzen, S. L.},
    TITLE     = {Graphical Models},
    SERIES    = {Oxford Statistical Science Series},
    VOLUME    = {17},
    PUBLISHER = {The Clarendon Press, Oxford University Press},
    ADDRESS   = {New York},
    YEAR      = {1996},
    ISBN      = {0-19-852219-3},
}

@inproceedings {weitz_counting_independent_sets,
    AUTHOR    = {Weitz, D.},
    TITLE     = {Counting independent sets up to the tree threshold},
    BOOKTITLE = {Proceedings of the 38th Annual {ACM} Symposium on Theory of Computing ({STOC}'06)},
    PAGES     = {140--149},
    PUBLISHER = {ACM},
    ADDRESS   = {New York},
    YEAR      = {2006},
}

@article {kelly_loss_networks,
    AUTHOR   = {Kelly, F. P.},
    TITLE    = {Loss networks},
    JOURNAL  = {Ann. Appl. Probab.},
    FJOURNAL = {The Annals of Applied Probability},
    VOLUME   = {1},
    YEAR     = {1991},
    NUMBER   = {3},
    PAGES    = {319--378},
}

@article {boorstyn1987throughput,
    AUTHOR   = {Boorstyn, R. and Kershenbaum, A. and Maglaris, B. and Sahin, V.},
    TITLE    = {Throughput analysis in multihop {CSMA} packet radio networks},
    JOURNAL  = {IEEE Trans. Commun.},
    FJOURNAL = {IEEE Transactions on Communications},
    VOLUME   = {35},
    YEAR     = {1987},
    NUMBER   = {3},
    PAGES    = {267--274},
}

@article {jiang2010csma,
    AUTHOR   = {Jiang, L. and Walrand, J.},
    TITLE    = {A distributed {CSMA} algorithm for throughput and utility maximization in wireless networks},
    JOURNAL  = {IEEE/ACM Trans. Netw.},
    FJOURNAL = {IEEE/ACM Transactions on Networking},
    VOLUME   = {18},
    YEAR     = {2010},
    NUMBER   = {3},
    PAGES    = {960--972},
}

@article {rajagopalan2009network_adiabatic,
    AUTHOR   = {Rajagopalan, S. and Shah, D. and Shin, J.},
    TITLE    = {Network adiabatic theorem: An efficient randomized protocol for contention resolution},
    JOURNAL  = {SIGMETRICS Perform. Eval. Rev.},
    FJOURNAL = {ACM SIGMETRICS Performance Evaluation Review},
    VOLUME   = {37},
    YEAR     = {2009},
    NUMBER   = {1},
    PAGES    = {133--144},
}

@article {baxter_hard_hexagons,
    AUTHOR   = {Baxter, R. J.},
    TITLE    = {Hard hexagons: exact solution},
    JOURNAL  = {J. Phys. A},
    FJOURNAL = {Journal of Physics. A. Mathematical and General},
    VOLUME   = {13},
    YEAR     = {1980},
    NUMBER   = {3},
    PAGES    = {L61--L70},
}

@article {cadilhe_rsa,
    AUTHOR   = {Cadilhe, A. and Ara{\'u}jo, N. A. M. and Privman, V.},
    TITLE    = {Random sequential adsorption: From continuum to lattice and pre-patterned substrates},
    JOURNAL  = {J. Phys. Condens. Matter},
    FJOURNAL = {Journal of Physics. Condensed Matter},
    VOLUME   = {19},
    YEAR     = {2007},
    NUMBER   = {6},
    PAGES    = {065124},
}

@article {kotecky_preiss,
    AUTHOR   = {Koteck{\'y}, R. and Preiss, D.},
    TITLE    = {Cluster expansion for abstract polymer models},
    JOURNAL  = {Comm. Math. Phys.},
    FJOURNAL = {Communications in Mathematical Physics},
    VOLUME   = {103},
    YEAR     = {1986},
    NUMBER   = {3},
    PAGES    = {491--498},
}

@article {fernandez_procacci,
    AUTHOR   = {Fern{\'a}ndez, R. and Procacci, A.},
    TITLE    = {Cluster expansion for abstract polymer models: New bounds from an old approach},
    JOURNAL  = {Comm. Math. Phys.},
    FJOURNAL = {Communications in Mathematical Physics},
    VOLUME   = {274},
    YEAR     = {2007},
    NUMBER   = {1},
    PAGES    = {123--140},
}

@article {vandin2012dendrix,
    AUTHOR   = {Vandin, F. and Upfal, E. and Raphael, B. J.},
    TITLE    = {De novo discovery of mutated driver pathways in cancer},
    JOURNAL  = {Genome Res.},
    FJOURNAL = {Genome Research},
    VOLUME   = {22},
    YEAR     = {2012},
    NUMBER   = {2},
    PAGES    = {375--385},
}

@article {leiserson2013multidendrix,
    AUTHOR   = {Leiserson, M. D. M. and Blokh, D. and Sharan, R. and Raphael, B. J.},
    TITLE    = {Simultaneous identification of multiple driver pathways in cancer},
    JOURNAL  = {PLoS Comput. Biol.},
    FJOURNAL = {PLoS Computational Biology},
    VOLUME   = {9},
    YEAR     = {2013},
    NUMBER   = {5},
    PAGES    = {e1003054},
}

@article {leiserson2016wext,
    AUTHOR   = {Leiserson, M. D. M. and Reyna, M. A. and Raphael, B. J.},
    TITLE    = {A weighted exact test for mutually exclusive mutations in cancer},
    JOURNAL  = {Bioinformatics},
    FJOURNAL = {Bioinformatics},
    VOLUME   = {32},
    YEAR     = {2016},
    NUMBER   = {17},
    PAGES    = {i736--i745},
}

@article {kim2016wesme,
    AUTHOR   = {Kim, Y.-A. and Madan, S. and Przytycka, T. M.},
    TITLE    = {{WeSME}: Uncovering mutual exclusivity of cancer drivers and beyond},
    JOURNAL  = {Bioinformatics},
    FJOURNAL = {Bioinformatics},
    VOLUME   = {33},
    YEAR     = {2017},
    NUMBER   = {6},
    PAGES    = {814--821},
}

@book {matern_spatial_variation,
    AUTHOR    = {Mat{\'e}rn, B.},
    TITLE     = {Spatial Variation},
    SERIES    = {Lecture Notes in Statistics},
    VOLUME    = {36},
    EDITION   = {2nd},
    PUBLISHER = {Springer-Verlag},
    ADDRESS   = {Berlin},
    YEAR      = {1986},
}

@book {stoyan_spatial_point_patterns,
    AUTHOR    = {Chiu, S. N. and Stoyan, D. and Kendall, W. S. and Mecke, J.},
    TITLE     = {Stochastic Geometry and Its Applications},
    SERIES    = {Wiley Series in Probability and Statistics},
    EDITION   = {3rd},
    PUBLISHER = {John Wiley \& Sons},
    ADDRESS   = {Chichester},
    YEAR      = {2013},
}

@article {PeyhardiFerniqueDurand2021,
    AUTHOR     = {Peyhardi, J. and Fernique, P. and Durand, J.},
    TITLE      = {Splitting models for multivariate count data},
    JOURNAL    = {J. Multivariate Anal.},
    FJOURNAL   = {Journal of Multivariate Analysis},
    VOLUME     = {181},
    YEAR       = {2021},
    PAGES      = {Paper No. 104677, 19},
}

@article {ValiquetteEtAl2026,
    AUTHOR   = {Valiquette, S. and Peyhardi, J. and Marchand, \'{E}. and Toulemonde, G. and Mortier, F.},
    TITLE    = {Tree {P}\'{o}lya {S}plitting distributions for multivariate count data},
    JOURNAL  = {J. Multivariate Anal.},
    FJOURNAL = {Journal of Multivariate Analysis},
    VOLUME   = {211},
    YEAR     = {2026},
    PAGES    = {Paper No. 105507, 16},
}

@article {PeyhardiFernique2017,
    AUTHOR   = {Peyhardi, J. and Fernique, P.},
    TITLE    = {Characterization of convolution splitting graphical models},
    JOURNAL  = {Statist. Probab. Lett.},
    FJOURNAL = {Statistics \& Probability Letters},
    VOLUME   = {126},
    YEAR     = {2017},
    PAGES    = {59--64},
}

@book {JohnsonKotzBalakrishnan1997,
    AUTHOR    = {Johnson, N. L. and Kotz, S. and Balakrishnan, N.},
    TITLE     = {Discrete multivariate distributions},
    SERIES    = {Wiley Series in Probability and Statistics: Applied Probability and Statistics},
    NOTE      = {A Wiley-Interscience Publication},
    PUBLISHER = {John Wiley \& Sons, Inc., New York},
    YEAR      = {1997},
    PAGES     = {xxii+299},
    ISBN      = {0-471-12844-9},
}

@article {DarrochLauritzenSpeed1980,
    AUTHOR     = {Darroch, J. N. and Lauritzen, S. L. and Speed, T. P.},
    TITLE      = {Markov fields and log-linear interaction models for contingency tables},
    JOURNAL    = {Ann. Statist.},
    FJOURNAL   = {The Annals of Statistics},
    VOLUME     = {8},
    YEAR       = {1980},
    NUMBER     = {3},
    PAGES      = {522--539},
}

@article {MassamLiuDobra2009,
    AUTHOR   = {Massam, H. and Liu, J. and Dobra, A.},
    TITLE    = {A conjugate prior for discrete hierarchical log-linear models},
    JOURNAL  = {Ann. Statist.},
    FJOURNAL = {The Annals of Statistics},
    VOLUME   = {37},
    YEAR     = {2009},
    NUMBER   = {6A},
    PAGES    = {3431--3467},
}

\end{document}